\definecolor{Emerald}{RGB}{40, 90, 40}
\g@addto@macro{\@algocf@init}{\SetKwInOut{Parameter}{Parameters}} 
\newcommand{\Edges}{\mathcal{E}}
\newcommand{\until}[1]{\{1,\dots, #1\}}
\newcommand\ddfrac[2]{\frac{\displaystyle #1}{\displaystyle #2}}
\newcommand\oprocendsymbol{\hbox{$\square$}}
\newcommand\oprocend{\relax\ifmmode\else\unskip\hfill\fi\oprocendsymbol}
\DeclareSymbolFont{bbold}{U}{bbold}{m}{n}
\DeclareSymbolFontAlphabet{\mathbbold}{bbold}
\newcommand{\vect}[1]{\mathbbold{#1}}
\newcommand{\vectorones}[1][]{\vect{1}_{#1}}
\DeclareSymbolFont{bbold}{U}{bbold}{m}{n}
\DeclareSymbolFontAlphabet{\mathbbold}{bbold}
\newtheorem{theorem}{Theorem}[section]
\newtheorem{corollary}[theorem]{Corollary}
\newtheorem{definition}[theorem]{Definition}
\newtheorem{lemma}[theorem]{Lemma}
\newtheorem{remark}[theorem]{Remark} 
\newtheorem{example}[theorem]{Example}
\begin{document}

\title{Network Formation for Multigroup Coordination: \newline
Stable vs.\ Efficient Connections}

\title{Stable and Efficient Structures \\ in Multigroup Network Formation}

\author{Shadi Mohagheghi,
        Jingying Ma,
        and~Francesco Bullo
\thanks{Shadi Mohagheghi and Francesco Bullo are with the Department of Electrical and Computer Engineering and the Center for Control, Dynamical Systems and Computation, University of California at Santa Barbara, Santa Barbara, 93106-9560 California, 
USA, \texttt{\{shadi,bullo\}@ucsb.edu} }
\thanks{Jingying Ma is with the School of Mathematics and Statistics, Ningxia University, Yinchuan 750021, P.~R.~China, \texttt{majy1980@126.com} }
\thanks{This material is based upon work supported by, or in part by, the U.S.~Army Research Laboratory and the U.S.~Army Research Office under grant numbers  W911NF-15-1-0577.}
}

\maketitle

\begin{abstract}
  In this work we present a strategic network formation model predicting
  the emergence of multigroup structures. Individuals decide to form or
  remove links based on the benefits and costs those connections carry; we
  focus on bilateral consent for link formation. An exogenous system
  specifies the frequency of coordination issues arising among the
  groups. We are interested in structures that arise to resolve
  coordination issues and, specifically, structures in which groups are
  linked through bridging, redundant, and co-membership interconnections.
  We characterize the conditions under which certain structures are stable
  and study their efficiency as well as the convergence of formation
  dynamics.
\end{abstract}

\begin{IEEEkeywords}
  strategic network formation, game theory, multigroup connectivity models
\end{IEEEkeywords}

\IEEEpeerreviewmaketitle

\section{Introduction}

\subsection{Motivation and problem description}

\IEEEPARstart{T}o study the coordination and control features of a group
task, the multiple groups’ performances must be fitted together. 
 An enduring postulate in organization science
is that coordination and control cannot be achieved strictly by the
authority structure, but must also entail informal communication and
influence networks that link the members of different task-oriented groups;
we focus on formation of such network structures. As the size of a
connected social network increases, multigroup formations that are
distinguishable clusters of individuals become a characteristic and
important feature of network topology. The connectivity of multigroup
networks may be based on edge bundles connecting multiple individuals in
two disjoint groups, bridges connecting two individuals in two disjoint
groups, or co-memberships. A large-scale network may include instances of
all of these connectivity modalities.  We set up populations of multiple
groups and propose a dynamic model for formation of these intergroup
connectivity structures.

Our economic dynamical model explains and predicts whether a network
evolves into different coordination and control structures. Medium and
large scale organizations adopt these multigroup structures to tackle
complex nested tasks. Among the multitude of possible coordination and
control structures, we study formation of multigroup connectivity
structures shown in Fig.~\ref{fig:schematic}, which are familiar constructs
in the field of social network science.
\begin{figure}[h]
	\begin{center} 
		\subfloat[Co-memberships]{\includegraphics[height=.89in]{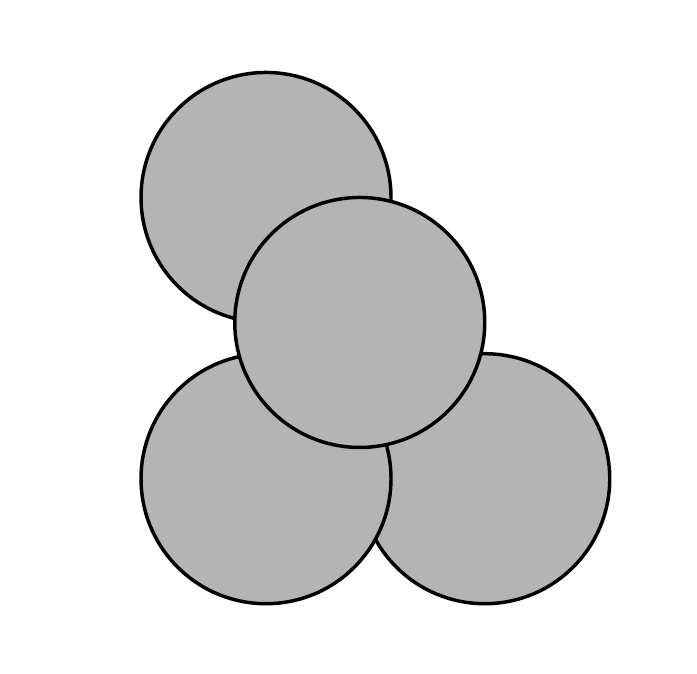}\label{fig:control-structure-3}}\qquad
		\subfloat[Edge Bundles]{\includegraphics[height=.8in]{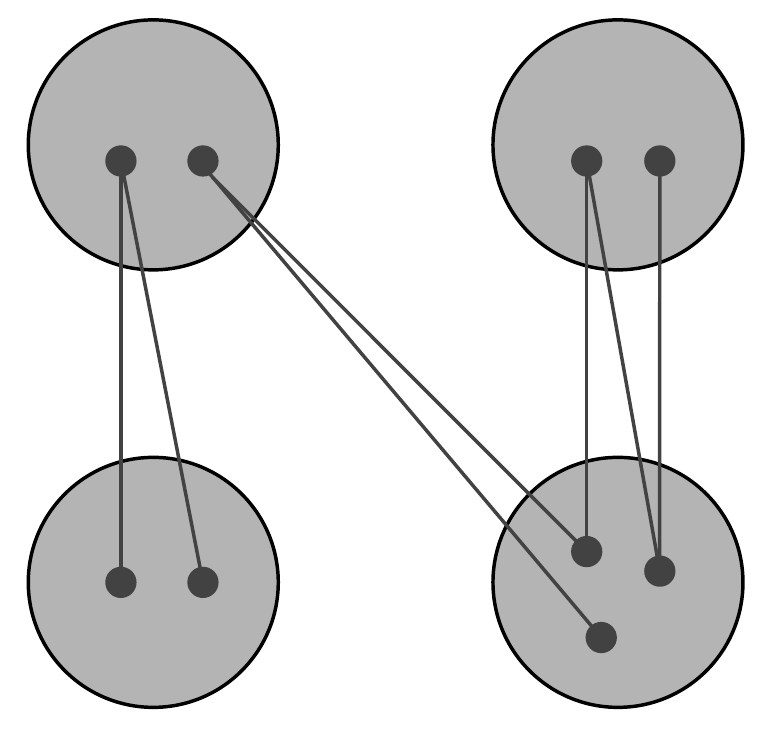}\label{fig:control-structure-2}}\qquad
		\subfloat[Bridges]{\includegraphics[height=.8in]{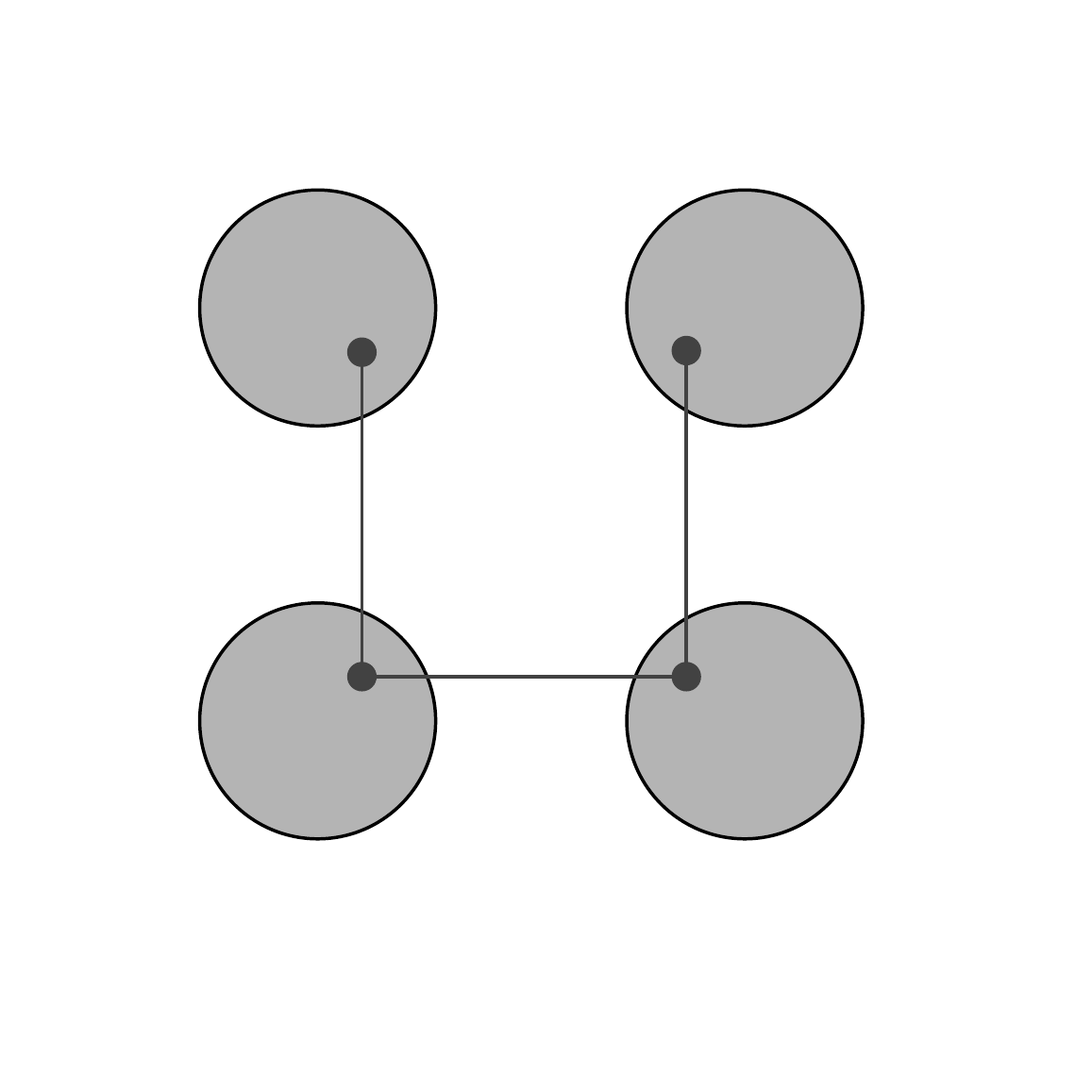}\label{fig:control-structure-1}}\\
		\caption{\small Schematic illustration of the three possible control
and coordination structures}\label{fig:schematic}
	\end{center}
\end{figure}
For this purpose we apply a game-theoretic framework in which strategic agents take actions based on the rate or importance of coordination problems. In other words, a value is assigned to the coordination problem between any two distinct groups, so that all control and coordination problems among groups are described by a square non-negative matrix, as illustrated in Fig.~\ref{fig:matrix-F}.
\begin{figure}
	\centering
	\includegraphics[width=0.3\linewidth , height=0.3\linewidth]{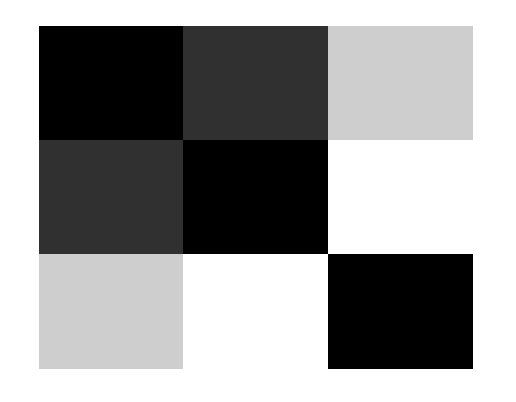}\\
\caption{$F$ = frequency/importance of intergroup coordination problem}\label{fig:matrix-F} 
\end{figure} 
In our setting, agents are myopic, self-interested, and have thorough knowledge of graph topology and the utility they acquire from any other agent.

\subsection{Related literature}

Bridge, edge bundle, and co-membership connectivity models have been studied extensively in~\cite{SM-PA-FB-NEF:17c}, where implications of these structures are investigated and generative models are proposed for each. These prototypical structures can mitigate coordination and control loss in an organization. Coordination and control importance of bridge connected structure, in which
communication between subgroups are based on single contact edges, is the emphasis of the~\cite{MSG:73, MT-DK:10}, and~\cite{WS-TE:08} models.  
Coordination and control importance of the redundant ties structure, in which multiple redundant contact edges connect pairs of groups, is the emphasis
of~\cite{NEF:98}, Chapter~8, ~\cite{NEF:83}, and~\cite{HCW-SAB-RLB:76}. Co-membership intersection
structures, in which subgroups have common members, is the emphasis of the linking-pin
model by Likert~\cite{RL:67}, as well as~\cite{BC-JAH:04} and~\cite{SPB-DSH:11}.~\cite{XZ-CW-YS-LP-HZ:17} and \cite{JY-JL:12} propose a community detection algorithm for overlapping networks.

Jackson and Wolinsky introduced a strategic network formation model in their seminal paper~\cite{MOJ-AW:96}. They studied pairwise stability, where bilateral agreement is required for link formation. Homogeneity and common knowledge of current network to all players are two assumptions in this model. Jackson and Watts studied strategic network formation in a dynamic framework in~\cite{MOJ-AW:02}. The network formation model we present in this work is closely related to~\cite{MOJ-AW:96} and~\cite{MOJ-AW:02}. Jackson and Rogers examined an economic model of network formation in \cite{MOJ-BWR:05} where agents benefit from indirect relationships. They showed that small-world features necessarily emerge for a wide set of parameters.

In \cite{VB-SG:00}, Bala and Goyal proposed a dynamic model to study Nash and strict Nash stability. In their model, starting from any initial network, each player with some positive probability plays a best response (or randomizes across them when there is more than one); otherwise the player exhibits inertia. A Markov chain on the state space of all networks is defined whose absorbing states are strict Nash networks. The authors proved that starting from any network, the dynamic process converges to a strict Nash network (i.e., the empty network or a center-sponsored star) with probability 1.

In \cite{NO-FV:13}, Olaizola and Valenciano extended the model in~\cite{VB-SG:00} and studied network formation under linking constraints. An exogenous link-constraining system specifies the admissible links. Players in the same component of the link-constraining network have common knowledge of that component. This model collapses to the unrestricted setting in~\cite{VB-SG:00} (when the underling constraining network is complete graph). The set of Nash networks is a subset of  Bala and Goyal's unrestricted Nash network sets.

In the network formation game by Chasparis and Shamma in~\cite{GCC-JSS:13} and~\cite{GCC-JSS:08}, agents form and sever unidirectional links with other nodes, and stable networks are characterized through the notion of Nash equilibrium. Pagan and D{\"o}rfler~\cite{NP-FD:19} studied network formation on directed weighted graphs and considered two notions of stability: Nash equilibrium to model purely selfish actors, and pairwise-Nash stability which combines the selfish attitude with the possibility of coordination among agents. McBride dropped the common knowledge assumption and studied the effects of limited perception (each player perceives the current network only up to a certain distance) in \cite{MMB:06}. Song and van der Schaar \cite{YS-MVDS:15} studied a dynamic network formation model with incomplete information.

Community networks and their growth into potential socially robust
structures is studied in~\cite{LM:19}. Bringmann et al. analyzed the
evolution of large networks to predict link creation among the nodes
in~\cite{BB-MB-FB-AG:10}. \cite{YJ-YW-XJ-ZZ-XC:17} studied link
inference problem in heterogeneous information networks by proposing a
knapsack-constrained inference method.

\subsection{Statement of contribution} 
We consider a strategic network formation game described by a cost of maintaining links, a benefit of having connections, and an importance of coordination problems among pre-specified groups. Our setup is a heterogeneous generalization of the famous connection model. For this game, we study the resulting multi-group  structures that are pairwise stable and socially efficient.

For this game, we also introduce a formation dynamics  whereby 
link formations require mutual consent and link removals can be initiated unilaterally. We study the conditions that give rise to formation of multigroup structures, as well as conditions which cause the multigroup structures be stable and/or efficient.
Our contributions are as follows:

We introduce certain threshold functions and provide bounds based on these functions to study pairwise stable and efficient structures. We also investigate the convergence of Formation Dynamics. For our analysis, we first focus on the structure of each group and formation of intra-connections. We particularly study the conditions which result in each group being a clique, and present results on pairwise stability, efficiency, and convergence of these cliques.

We then focus on the interconnections among those cliques. We present results on the pairwise stability and convergence to disjoint union of cliques for multigroup structures of arbitrary sizes. The rest of the analysis for density of interconnections is divided into two sections: two-group connectivity structures and multigroup connectivity structures. 

For the two group structures, we provide a complete characterization of full ranges of parameters for stability and  efficiency. We present results on the pairwise stability and efficiency of minimally connected, redundantly connected, and maximally connected structures. We identify the ranges of parameter in which the efficient and the pairwise stable structure overlap and those in which they have a conflict.

We then investigate the multigroup structures. We study the pairwise stability of minimally connected cliques along arbitrary interconnection structures. We show that for the special case of the interconnection being a star graph, it is possible to identify the boundaries of parameters for stability of all interconnections being minimally connected. We also present results on formation of redundancies and for efficiency.

\subsection{Preliminaries}

Each undirected graph is identified with the pair $\mathcal{(V, E)}$. The set of graph nodes $\mathcal{V} \neq \emptyset$ represents individuals or groups of individuals in a social network. $|\mathcal{V}|=n$ is the size of the network. The pair $(i,j)$ is called an edge and it indicates the interaction between the two individuals $i$ and $j$. The set of graph edges $\Edges$ represents the social interactions or ties among all individuals. Throughout this paper, since the individuals are unchangeable, we refer to the network $\mathcal{(V, E)}$ simply as $\Edges$. 

The density of a graph is given by the ratio of the number of its observed to possible edges, $ \ddfrac{2|\Edges|}{n(n-1)}$. In a complete graph every pair of distinct nodes is connected by an edge. We denote the complete graph of size $n$ by $K_n$. A clique is a subset of vertices of a graph in which every two distinct vertices are adjacent. We say two graphs are adjacent if they differ in precisely one edge. A path of length $k$ is a sequence of nodes $i_{1}i_{2}\dots i_{k}$ such that $\{(i_{s},i_{s+1}) \} \in \Edges$.  A walk of minimum length between two nodes is the shortest path. $d_{ij}(\Edges)$ denotes the distance between nodes $i$ and $j$, which is defined as the length of the shortest path beginning at $i$ and ending at $j$. 

\section{Multigroup Network Formation Model}
\label{sec:model}
Consider a society of $n$ individuals $\mathcal{V}$, divided into $m$ groups. The set of $m$ groups is denoted by $\until{m}, m \leq n$.
$P= \{ P_1, \dots, P_m\}$ represents the partitioning of individuals into
the groups and is a set partition of size $n$, i.e, $\mathcal{V}=
\bigcup \limits_{\gamma=1}^{m} P_{\gamma}$, and $\bigcap \limits_{\gamma=1}^{m} P_{\gamma} =
\emptyset$. We use the shorthand notation $s_{\gamma}= |P_{\gamma}|$
denoting the size of group $\gamma$. Throughout this paper, we assume that
$s_{\gamma} \geq 3$ for all $\gamma \in \until{m}$.

\textit{Group coordination importance matrix (data)}: is given as $F \in
\mathbb{R}^{m \times m}$, where $ 0 \leq F_{\alpha \beta} \leq 1$ for
$\alpha, \beta \in \until{m}$ represents importance/frequency of coordination
problem between groups $\alpha$ and $\beta$. We assume $F$ is a symmetric
matrix with diagonal entries equal to $1$.

\textit{Individual coordination importance matrix}: $\hat{F}  \in \mathbb{R}^{n \times n}$, is obtained from $F$ and the partition $P$, i.e., $\hat{F}= f(F, P)$. We construct $\hat{F}$ as follows:
  
  \begin{equation*}  
    \hat{F}_{ij} =\begin{cases}
    F_{\alpha \beta}, & i \in P_{\alpha},  j \in P_{\beta},  i \neq j \\
    0, & i=j. 
    \end{cases}
  \end{equation*}
  
  For the setting where groups are all of equal size $s$, one can write
  \begin{equation*}
     \hat{F}=F \otimes  \vectorones[s] \vectorones[s]^T -I_n 
  \end{equation*}

At edge set $\Edges$, the payoff function for individual $i \in
\mathcal{V}$ is 
\begin{equation}\label{payoff-function}
  U_i (\Edges) = \sum_{k=1}^n \hat{F}_{ik} \delta^{d_{ik}
    (\Edges)} - \sum\nolimits_{k \in N_i(\Edges) } c,
\end{equation}
where $d_{ik}(\Edges)$ is the number of steps from individual $i$
to $k$, $\delta < 1$ is the one-hop benefit, and $c$ is the cost of each
link. The value of network $\Edges$ is defined as the sum of all individuals' payoffs, i.e., $v(\Edges)=\sum_{i=1}^{n}U_{i}(\Edges)$, and it indicates the social welfare. For a given society $\mathcal{V}$ and value function $v$, $\Edges^*$ is an \emph{efficient structure} if its social welfare(value) is maximized over all possible edge sets on $\mathcal{V}$, i.e., $\Edges^*=  \arg \max\limits_{\Edges} v(\Edges)$. Given the pair $(i, j)$ in network $\Edges$, we say that individual $i$ \emph{benefits from edge} $\{(i,j)\}$ if
$
  U_i \big (\Edges\cup \{( i,j) \} \big ) > U_i \big(\Edges \setminus \{( i,j) \}
  \big).
$

\textit{Formation Dynamics}: Time periods are represented with countable
infinite set $\mathbb{N}= \{ 1, 2, \dots, t, \dots \}$.  In each period, a
pair $(i,j)$ is uniformly randomly selected and is added to, or removed
from, the network $\Edges$ according to the following rules:
\begin{itemize}

\item if $\{(i,j)\} \notin \Edges$, then it is added when its
  addition is marginally beneficial to the pair of individuals (i.e.,
  either both individuals benefit or one individual is indifferent and the
  other benefits); the edge $(i,j)$ is not added when its addition causes
  a drop in the payoff of either or both individuals or both individuals
  are indifferent towards it; and
\item if $\{(i,j)\} \in \Edges$, then $(i,j)$ is removed when its
  removal benefits at least one of the two individuals; no action is taken
  when both sides are either indifferent or benefit from the existence of
  the edge.
\end{itemize}

\begin{definition}(Pairwise Stability)\label{def:pairwise_stable}
A network $\Edges$ is \emph{pairwise stable} if,
\begin{equation*}
\begin{aligned}
& \text{for all } \{(i,j)\}\in\Edges,\\
&  \quad U_i(\Edges) \geq U_i(\Edges \setminus \{(i,j)\} ) 
\text{ and } U_j(\Edges) \geq U_j(\Edges \setminus \{( i,j) \}); \\
 & \text{and } \text{for all } \{(i,j)\} \notin \Edges, \\
 &\quad \text{if } U_i(\Edges)<U_i(\Edges\cup\{(i,j)\}), \text{ then }
U_j(\Edges)>U_j(\Edges\cup\{(i,j)\}).
\end{aligned}
\end{equation*}
\end{definition}
 \begin{remark}\label{remark:pairwise-stable}
According to Definition~\ref{def:pairwise_stable}, if the edge $(i,j)$ belongs to the pairwise stable network, removing it results in a loss for $i$ or $j$; and if the edge $(i,j)$ does not belong to the pairwise stable network, adding it makes no difference or causes loss for $i$ or $j$. 
 \end{remark}

\begin{definition}
$\Edges'$ defeats $\Edges$ if either $\Edges'=\Edges \setminus
\{(i,j)\}$ and $U_i(\Edges')>U_i(\Edges)$, or $\Edges'=\Edges \cup \{(i,j)\}$
and $U_i(\Edges') \geq U_i(\Edges)$ and $U_j(\Edges') \geq U_j(\Edges)$ with at least one inequality holding strictly. 
\end{definition}\label{def:improving-path}

\begin{lemma}\label{lem:stable-by-dynamics}
A network is pairwise stable if and only if it does not change under Formation Dynamics.
\end{lemma}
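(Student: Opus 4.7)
The plan is a straightforward unpacking of definitions in both directions, with careful attention to the boundary behavior (strict vs.\ weak inequalities, indifference) in the Formation Dynamics rules.

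First I would prove the forward direction: if $\Edges$ is pairwise stable, then no pair $(i,j)$ can trigger a change. I would split into the two cases of the dynamics. For an existing edge $(i,j) \in \Edges$, removal requires a strict benefit to at least one endpoint, but Definition~\ref{def:pairwise_stable} guarantees $U_i(\Edges) \ge U_i(\Edges \setminus\{(i,j)\})$ and the analogous inequality for $j$, so removal is blocked. For a non-edge $(i,j) \notin \Edges$, addition under the dynamics requires either that both endpoints strictly benefit, or that one is indifferent and the other strictly benefits; in each sub-case, one endpoint satisfies $U_k(\Edges) < U_k(\Edges \cup \{(i,j)\})$, and pairwise stability then forces the other endpoint to strictly lose, contradicting the hypothesis of the sub-case. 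Hence no action is ever taken.

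Next I would prove the reverse direction by contrapositive: if $\Edges$ is not pairwise stable, some pair causes a change. Negating Definition~\ref{def:pairwise_stable}, either (i) there is an edge $(i,j) \in \Edges$ with $U_i(\Edges) < U_i(\Edges \setminus\{(i,j)\})$ or the symmetric inequality for $j$, in which case the dynamics will remove that edge whenever the pair $(i,j)$ is selected; or (ii) there is a non-edge $(i,j) \notin \Edges$ with $U_i(\Edges) < U_i(\Edges \cup \{(i,j)\})$ and $U_j(\Edges) \le U_j(\Edges \cup \{(i,j)\})$. In case (ii) the pair $(i,j)$ satisfies the dynamics' addition criterion (one strictly benefits and the other at worst is indifferent), so the edge is added. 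Either way $\Edges$ is not invariant.

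The only delicate point, and really the single place one must be careful, is aligning the strict/weak inequalities. The definition of pairwise stability is phrased with strict inequalities in the ``if-then'' clause for non-edges ($<$ implies $>$), while the Formation Dynamics rule for adding an edge combines strict benefit with possible indifference. I would make explicit the equivalence between ``addition is marginally beneficial to the pair'' and the conjunction ``$U_i(\Edges) < U_i(\Edges \cup \{(i,j)\})$ and $U_j(\Edges) \le U_j(\Edges \cup \{(i,j)\})$, or the analogous statement with $i$ and $j$ swapped'' so that the case analysis in both directions is unambiguous. Once this dictionary is set up, both implications follow immediately, and no further structural properties of $\Edges$ or of the payoff~\eqref{payoff-function} are used.
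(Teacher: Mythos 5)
Your proposal is correct and follows essentially the same route as the paper's proof: both directions are a direct unpacking of Definition~\ref{def:pairwise_stable} against the Formation Dynamics rules, with the necessity direction blocking additions/removals case by case and the sufficiency direction argued via the negation of stability. Your version is simply more explicit about matching the strict and weak inequalities (in particular that ``not removed'' means removal strictly benefits neither endpoint, and that the addition rule corresponds exactly to the negation of the non-edge clause), which the paper states more tersely by appeal to Remark~\ref{remark:pairwise-stable}.
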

\begin{proof}
To prove necessity, we refer to Remark~\ref{remark:pairwise-stable}. According to the definition, if a network is pairwise stable, no network can defeat it, i.e., no links can be added to or severed from it. To show sufficiency, note that a network not being changed by Formation Dynamics, implies that: 
\begin{enumerate}
\item adding a link makes no difference or causes loss for at least one individual;
\item removing a link results in loss for at least one individual. 
\end{enumerate}
Therefore, the network is pairwise stable. 
\end{proof}

According to Lemma~\ref{lem:stable-by-dynamics}, if there exists some time $t^*$
such that from $t^*$ on, no additional links are added to or severed from a network by Formation Dynamics, then the network has reached the pairwise stable structure.

We define the following terms that we will frequently use throughout this paper indicating the density of the interconnections among the groups. 
\begin{definition}
We say that a society of individuals consists of {\it the disjoint union of groups} if there exists no interconnection among any pairs of groups.  
For a connected pair, we say it is
\begin{enumerate}
\item {\it minimally connected} if there exists exactly one interconnection among the pair;
\item {\it redundantly connected} if there exist at least two interconnections among the pair;
\item {\it maximally connected} if all of the possible interconnections among the pair of groups exist.
\end{enumerate}
\end{definition}
Fig. \ref{fig:density_interC} represents a schematic illustration of the terms discussed above.
\begin{remark}
A minimally connected pair corresponds to the bridge connection (Fig.~\ref{fig:control-structure-1}), redundantly connected to the ridge connection (Fig.~\ref{fig:control-structure-2}), and maximally connected to a full co-membership connection (Fig.~\ref{fig:control-structure-3}.)
\end{remark}
\begin{figure}[h]
	\begin{center} 	
	\includegraphics[width=0.99\linewidth]{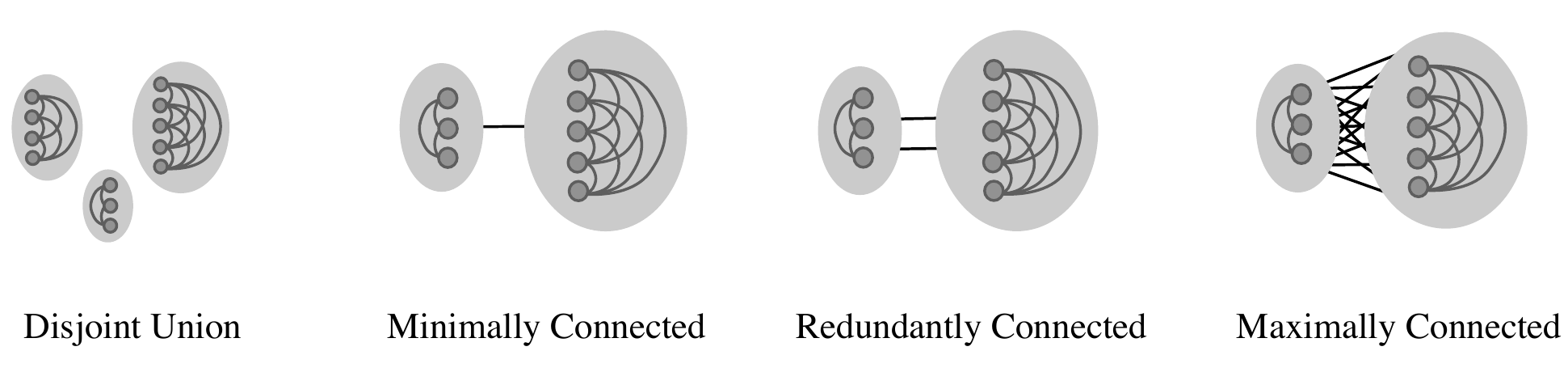}\label{fig:}
		\caption{\small Schematic illustration of interconnection densities}\label{fig:density_interC}
	\end{center}
\end{figure}

We next define the Price of Anarchy (PoA) as a measure of how the efficiency of a system degrades due to the selfish behavior of its individuals. It is calculated as follows:
\[
PoA=\dfrac{\max_{\Edges}v(\Edges)}{\min_{p.w. stable\Edges}v(\Edges)}.
\]

Throughout this paper we use the following threshold functions
$y_1(s, \delta)$, $y_2(s, \delta)$, and $y_3(\delta)$ defined by
\begin{equation*}
\begin{aligned}
  & y_1(s, \delta)= {\delta + \big(s-1\big) \delta^2},\\
  & y_2(s, \delta) =  {\delta - \delta^2 + \big(s-1\big) \delta^2 - \big(s-1\big) \delta^3} = { \big (1-\delta \big )y_1(s)},\qquad \\
 & y_3(\delta)={\delta - \delta^2}.
\end{aligned}
\end{equation*}
In what follows we will often suppress the argument $\delta$ in the interest of simplicity.

Under the conditions $0<\delta<1$ and $s\geq 3$, we claim that,
\begin{equation*}
  0<y_3<y_2(s)<y_1(s).
\end{equation*}
The proof is as follows: it is easy to see that $y_2(s) < y_1(s)$. To
verify $y_3 < y_2(s) $, we rewrite $y_2(s)$ as $ \big (\delta - \delta^2
\big) \big(1+ \delta (s-1) \big) = y_3 \big(1+ \delta (s-1) \big) > y_3$.
Plots of these three threshold functions for $0<\delta<1$, where $s=3$ are depicted in
Fig.~\ref{fig:thresholds}.
\begin{figure}[ht]
	\centering
	\includegraphics[width=0.6\linewidth]{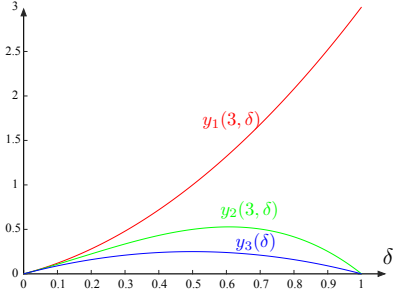}
        \caption{Plots of $y_1$, $y_2$, and $y_3$ for $s=3$.}\label{fig:thresholds} 
\end{figure} 
  In what follows we provide bounds based on these functions to study pairwise stable and efficient structures, and investigate the convergence of the Formation Dynamics when possible.

\section{Results on Formation of Disjoint Cliques}

 We first study the inner structure of each group in a pairwise stable network.  Throughout this paper, we assume that the dynamics does not start with an initial state containing any interconnection.  We define the invariant set of all subgraphs of disjoint cliques as $S= \Big \{ \bigcup \limits_{\gamma=1}^{m} \Edges_{\gamma} \ | \  \Edges_{\gamma} \subset \Edges_{ K_{s_{\gamma}}}  \}$ where $\Edges_{\gamma}$ indicates the inner-network of group $P_{\gamma}$.

\begin{theorem}[Formation of Cliques: Pairwise Stability, Efficiency, Convergence]\label{thm: Pairwise Stability of Cliques}
Consider $n$ individuals partitioned into groups $P_1,\dots,P_m$. Then, each one of these $m$ groups is a clique in the pairwise stable and in the efficient structure if and only if $c<y_3$.  Moreover, starting from any state in the invariant set $S$, each group $P_{\alpha}$ will form a $s_{\alpha}$-size clique along Formation Dynamics (introduced in Section~\ref{sec:model}).
\end{theorem}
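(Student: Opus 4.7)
My plan is to ground all three claims on a single marginal inequality for same-group pairs. For $i, j \in P_\alpha$ and any graph $\Edges$, forming the edge $(i,j)$ shifts $U_i$ by the direct term $\delta - \delta^{d_{ij}(\Edges)}$ (using $\hat F_{ij} = F_{\alpha\alpha} = 1$), a nonnegative indirect term coming from distances $d_{ik}$ to third parties $k$ which can only shorten, and the new linking cost $-c$. Since $(i,j) \notin \Edges$ forces $d_{ij} \geq 2$ (possibly infinite) and $\delta < 1$, this marginal is at least $\delta - \delta^2 - c = y_3 - c$, with the symmetric bound for $U_j$. An analogous computation shows that removing an existing intra-group edge decreases $U_i$ by at least $y_3 - c$, since the new distance still satisfies $d'_{ij} \geq 2$ and hence $\delta - \delta^{d'_{ij}} \geq y_3$.

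With this inequality, the sufficiency direction is immediate: if $c < y_3$ and some pairwise stable $\Edges$ had $P_\alpha$ not a clique, any missing intra-group pair would be strictly mutually profitable to form, contradicting stability. Summing the two strictly positive marginals with the nonnegative third-party contributions gives $\Delta v \geq 2(y_3 - c) > 0$, which likewise rules out efficiency. For the necessity direction, I would exhibit, whenever $c \geq y_3$, a pairwise stable and equally efficient structure in which some group is not a clique, using the reverse marginal $c - y_3 \geq 0$: removing an edge from a full clique is weakly beneficial to both endpoints and strictly so when $c > y_3$, so the clique is not the unique pairwise stable or efficient structure. A concrete witness is the in-group star with no interconnections, which one verifies directly to be pairwise stable (and at least as socially valuable as the disjoint clique structure) in this regime.

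For convergence under Formation Dynamics starting in $S$, the same inequality drives a monotonicity argument on the intra-group edges of each $P_\alpha$: when a missing intra-group pair is selected, the strict mutual gain $\geq y_3 - c > 0$ forces the edge to be added, while when an existing intra-group edge is selected the strict loss $\geq y_3 - c > 0$ blocks its removal. The intra-group edge count of each $P_\alpha$ is therefore monotone nondecreasing along every sample path, and since uniform selection hits each pair with positive probability at every step, each $P_\alpha$ saturates to $K_{s_\alpha}$ in finite time with probability one. The main subtlety I anticipate lies in tracking the bookkeeping once Formation Dynamics may have added interconnections and left $S$; the saving point is sign preservation---forming an intra-group edge can only further shorten distances to other nodes, and severing it can only lengthen them---so any interconnections merely reinforce the sign of the bare $(y_3 - c)$ marginal rather than flip it, and the per-group monotonicity is preserved.
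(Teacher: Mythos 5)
Your proposal is correct and follows essentially the same route as the paper's proof: the per-endpoint marginal $\delta-\delta^{d_{ij}}-c\geq y_3-c$ for intra-group pairs (with sign-preserving third-party terms) drives pairwise stability, the sum of the two endpoint marginals gives the efficiency claim, the reverse marginal $c-y_3\geq 0$ gives necessity, and monotonicity of intra-group edges gives convergence. Your treatment of convergence and of the indirect terms is more explicit than the paper's (which dispatches convergence with ``the same argument''), but it is the same underlying argument.
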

\begin{proof}
We first provide the proof of sufficiency for pairwise stability: for any individual $i\in P_{\alpha}$, a direct link with individual $j$, $(j\neq i)$ from the same group provides a profit of $\delta-c$. Without a direct link, this profit is equal to $\delta^{d_{ij}}$ where $d_{ij}>1$ is the distance between $i$ and $j$ in $\Edges \setminus \{(i,j)\}$. Since $\delta-\delta^2>c$, we have 
  $\delta-c>\delta^2 >\dots >\delta^n$; meaning that all
  agents prefer direct links to any indirect link. Thus, if agents $i$ and
  $j$ in group $P_\alpha$ are not directly connected, they will form a link
  and each will gain at least $(\delta-c)-\delta^{d_{ij}}>0$,
  i.e.,
  \[
  \begin{aligned} 
  &\text{for all } \{(i,j)\} \notin \Edges,\quad i, j
    \in P_{\alpha}, \ i \neq j \\
   & \quad  U_i(\Edges)<U_i(\Edges\cup\{(i,j)\}),
    \text{ and } U_j(\Edges)<U_j(\Edges\cup\{(i,j)\}).
  \end{aligned}
  \]
  Moreover, no node has an incentive to break any link since its payoff strictly
  decreases if it do so, i.e., 
  \[
  \begin{aligned}
 &   \text{for all } \{(i,j)\} \in \Edges, \quad i, j \in P_{\alpha},\ i \neq j , \\
      &  \quad U_i(\Edges) > U_i(\Edges \setminus \{(i,j)\}), \text{ and }
    U_j(\Edges) > U_j(\Edges \setminus \{( i,j) \}).
  \end{aligned}
  \]
  Thus, each group forms a clique and no intra-connection is removed after being formed, and according to Lemma \ref{lem:stable-by-dynamics},  these $m$ groups are cliques in the pairwise stable structure.
To prove necessity,  assume we have a pairwise stable clique. 
For $P_\alpha$ to remain a clique, all pairs of nodes belonging to the same group should prefer to keep one-hop links rather than having links with larger lengths, and thus $\delta-c>\delta^2>\delta^3> \dots$. This proves the claim that each group $P_\alpha$ is a clique if and only if $c<\delta-\delta^2$. Convergence of dynamics to cliques can be obtained directly from the same argument.

 We now continue by first proving that if $c<\delta - \delta^2$, in the efficient structure each group is a clique. From the analysis above, when $c<\delta-\delta^2$, we have:
\[
\begin{aligned}
v\big(\Edges\cup\{(i,j)\}\big)&-v\big(\Edges\setminus\{(i,j)\}\big) \\
  & \geq U_i\big(\Edges\cup\{(i,j)\}\big)+U_j\big(\Edges\cup\{(i,j)\}\big)
    \\
    &\quad-U_i\big(\Edges\setminus\{(i,j)\}\big) -U_j\big(\Edges\setminus\{(i,j)\}\big)\\
 &   \geq 2(\delta -c-\delta^{2})>0
\end{aligned}
\] 
which holds for each pair $(i,j)$ belonging to the same group, meaning that each group is a clique in the efficient structure.  
We next prove necessity for efficiency: assume $\Edges$ is the efficient structure and each group is a clique, i.e., $\{(i,j)\}\in \Edges$ for any two individuals $i, j$, $(i \neq j)$ from the same group. Then, we have:
\[
\begin{aligned}
v(\Edges)&-v(\Edges\setminus\{(i,j)\})\\
 & =U_i(\Edges)+U_j(\Edges)-U_i(\Edges\setminus\{(i,j)\})-U_j(\Edges\setminus\{(i,j)\})\\
 &=2(\delta -c-\delta^{2})>0,
\end{aligned}
\]
which results in $c<y_3$.
\end{proof}

\textbf{Note}: Theorem~\ref{thm: Pairwise Stability of Cliques} implies that formation of cliques requires $c < 1/4$.

\begin{theorem}[Pairwise Stable Structures and Convergence: Disjoint Union of Cliques]\label{thm:notequal-Disjoint} 
Consider $n$ individuals partitioned into groups $P_1,\dots,P_m$ of sizes $s_{1},\dots,s_{m}$ respectively. Assume that $c<y_3$. Then, the unique pairwise stable structures consists of disjoint union of cliques equal to the groups $P_1,\dots,P_m$ if and only if $F_{\alpha \beta} \leq \max\limits_{s \in\{s_{\alpha}, s_{\beta}\}} \dfrac{c}{ y_1(s)} $ for all $\alpha, \beta \in \until{m}$, $\alpha \neq \beta$. Moreover, starting from any state in the invariant set $S$, Formation Dynamics (introduced in Section~\ref{sec:model}) converges to this pairwise stable structure.
\end{theorem}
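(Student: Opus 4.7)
The plan is to combine Theorem~\ref{thm: Pairwise Stability of Cliques} with a one-step marginal-utility computation for a single crossing edge evaluated at the disjoint-cliques state. Since $c<y_3$, Theorem~\ref{thm: Pairwise Stability of Cliques} already guarantees that in every pairwise stable structure each group $P_\gamma$ is an $s_\gamma$-clique, and that under Formation Dynamics starting from any state in $S$ every group's inner-network fills in to $K_{s_\gamma}$. So the only question left open is when no intergroup edge is ever formed.

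Fix distinct indices $\alpha,\beta$, pick $i\in P_\alpha$ and $j\in P_\beta$, and assume the current graph is the disjoint union of the $m$ cliques. Adding the single edge $(i,j)$ gives $i$ a direct link to $j$ of value $F_{\alpha\beta}\delta$ plus a two-hop path through $j$ to each of the other $s_\beta-1$ members of $P_\beta$ of value $F_{\alpha\beta}\delta^{2}$, minus the link cost $c$, so
\[
\Delta U_i \;=\; F_{\alpha\beta}\bigl(\delta+(s_\beta-1)\delta^{2}\bigr)-c \;=\; F_{\alpha\beta}\,y_1(s_\beta)-c,
\]
and symmetrically $\Delta U_j = F_{\alpha\beta}\,y_1(s_\alpha)-c$. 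The non-edge clause of Definition~\ref{def:pairwise_stable} is satisfied at $(i,j)$ exactly when it is not the case that both $\Delta U$'s are strictly positive, which is equivalent to $F_{\alpha\beta}\leq\max_{s\in\{s_\alpha,s_\beta\}}c/y_1(s)$.

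Sufficiency follows: under the $F$-hypothesis every candidate crossing edge has at least one endpoint that does not strictly benefit, so the non-edge clause holds for every such $(i,j)$, and together with Theorem~\ref{thm: Pairwise Stability of Cliques} the disjoint union of the $m$ cliques is pairwise stable. Necessity is the contrapositive: if the inequality fails for some pair $\alpha,\beta$, then both $\Delta U_i$ and $\Delta U_j$ are strictly positive and $(i,j)$ is Pareto-improving across the pair, violating stability. Convergence from $S$ then proceeds as an invariance argument combined with Lemma~\ref{lem:stable-by-dynamics}: by induction along the trajectory the current graph is always a subgraph of the disjoint $m$ cliques, so whenever Formation Dynamics samples a crossing pair the marginal computation above applies verbatim and the draw is rejected; meanwhile Theorem~\ref{thm: Pairwise Stability of Cliques} fills in the intra-group edges with probability one, absorbing the trajectory at the disjoint union of cliques.

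The delicate step I expect is the uniqueness part of the equivalence, where one wants to rule out pairwise stable structures that do carry some crossing edges. The one-shot computation does not immediately suffice, because in a denser configuration an intergroup edge $(i,j)$ may transmit shortest paths to nodes beyond $P_\beta$, and its marginal value to $i$ can then exceed $F_{\alpha\beta}\,y_1(s_\beta)-c$. I would handle this by induction on the number of crossing edges: in any alleged alternative PS structure, pick a crossing edge $(i,j)$ whose deletion leaves the connectivity between pairs other than $(P_\alpha,P_\beta)$ unaffected, so that its marginal to at least one of its endpoints reduces to the disjoint-cliques expression $F_{\alpha\beta}\,y_1(s_\beta)-c$ or $F_{\alpha\beta}\,y_1(s_\alpha)-c$; the $F$-hypothesis then forces that endpoint to strictly prefer deletion, contradicting pairwise stability and completing the uniqueness argument.
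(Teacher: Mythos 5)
Your sufficiency, necessity, and convergence arguments coincide with the paper's own proof: the paper makes exactly the one-step marginal computation at the disjoint-cliques state, namely $U_i(\Edges\cup\{(i,j)\})\le F_{\alpha\beta}\,y_1(s_\beta)-c$ and $U_j(\Edges\cup\{(i,j)\})\le F_{\alpha\beta}\,y_1(s_\alpha)-c$ with equality when both groups are cliques, observes that the hypothesis forces at least one of these to be nonpositive, and then invokes Theorem~\ref{thm: Pairwise Stability of Cliques} and Lemma~\ref{lem:stable-by-dynamics}, just as you do.

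The divergence is in the uniqueness step you rightly flag as delicate. The paper does not attempt to rule out pairwise stable configurations that carry crossing edges: it works entirely inside the invariant set $S$ of subgraphs of disjoint cliques (recall the standing assumption, stated just before Theorem~\ref{thm: Pairwise Stability of Cliques}, that the dynamics never starts from a state containing an interconnection), so ``unique pairwise stable structure'' is to be read within $S$, where the one-shot computation suffices. Your proposed induction for global uniqueness has a genuine gap. First, a crossing edge ``whose deletion leaves the connectivity between pairs other than $(P_\alpha,P_\beta)$ unaffected'' need not exist: if the group-level interconnection graph is a tree, every crossing edge is a cut edge. Second, even in the favorable case where $P_\beta$ is a leaf group attached to $P_\alpha$ by a single edge $(i,j)$, the marginal value reduces to the disjoint-cliques expression only for the endpoint $i$ (and equals $F_{\alpha\beta}y_1(s_\beta)-c$, which the hypothesis $F_{\alpha\beta}\le c/y_1(\min\{s_\alpha,s_\beta\})$ does \emph{not} force to be nonpositive when $s_\beta>s_\alpha$); for $j$ it additionally includes the benefit of reaching every group accessible through $P_\alpha$ and can easily be strictly positive. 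So with $m\ge 3$ and unequal group sizes, both endpoints of a crossing edge can strictly prefer to keep it even though $F_{\alpha\beta}$ satisfies the theorem's bound; and in configurations where every crossing edge carries traffic to several groups for both endpoints (e.g.\ a long cycle of groups, where the marginal of a bridge behaves like $F\,y_1(s)(1+\delta^2+\delta^4+\cdots)-c$), no endpoint need prefer deletion at all. Your induction therefore does not close the case of alleged stable structures with crossing edges; you should either restrict the uniqueness claim to the invariant set $S$, as the paper implicitly does, or supply a substantially different argument.
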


\begin{proof}
We first prove sufficiency.  Since no interconnection exists in the invariant set $S$, for any network $\Edges\in S$, suppose two individuals $i\in P_{\alpha}$ and $j\in P_{\beta}$ are picked to decide whether to add the corresponding interconnection or not. We know that $U_{i}(\Edges \cup \{(i,j)\})\leq F_{\alpha \beta}y_{1}(s_{\beta})-c$ and 
$U_{j}(\Edges \cup \{(i,j)\})\leq F_{\alpha \beta}y_{1}(s_{\alpha})-c$, 
where equalities hold when both groups form cliques. Since $F_{\alpha \beta} \leq \max\limits_{s \in\{s_{\alpha}, s_{\beta}\}} \dfrac{c}{ y_1(s)} $, at least one of $U_{i}\big(\Edges \cup \{(i,j)\}\big)\leq 0$ and $U_{i}\big(\Edges \cup \{(i,j)\}\big)\leq 0$ holds. Therefore, the interconnection 
$\{(i,j)\}$ does not belong to pairwise stable structure and it does not form. From Theorem \ref{thm: Pairwise Stability of Cliques}, we know that unique stable state consists of the disjoint union of cliques equal to the groups $\until{m}$. Suppose that all groups form cliques at a time $t^{*}$. From then on, no link will be added or removed. 
According to Lemma \ref{lem:stable-by-dynamics},  the pairwise stable structure consists of the disjoint union of cliques equal to the groups $\until{m}$, and the network converges to this unique stable state.

To prove necessity, take any two individuals $i\in P_{\alpha}$ and $j\in P_{\beta}$. Since $\{(i,j)\}$ does not belong to pairwise stable structure, we have at least one of 
$U_{i}(\Edges \cup \{(i,j)\})\leq 0$ and $U_{i}(\Edges \cup \{(i,j)\})\leq 0$ holds, and therefore, $F_{\alpha \beta} \leq \max\limits_{s \in\{s_{\alpha}, s_{\beta}\}} \dfrac{c}{ y_1(s)} $. 

\end{proof}

\section{Two Group Connectivity Structure}

In this section we study pairwise stable and efficient structures when individuals are partitioned into two groups. 

\subsection{Pairwise Stability}
In what follows we give the sufficient and necessary condition for pairwise stable structures.

\begin{theorem}[Pairwise Stability and Convergence with Two groups]\label{thm:existence-convergence-two}
  Consider $n$ individuals partitioned into groups $P_1, P_2$ of sizes  $s_1$ and $s_2$ respectively. Then, under the assumption $c<y_3$, the network has 
  \begin{enumerate}
   \item\label{fact:bridge-2groups} a unique
     pairwise stable structure consisting of minimally connected cliques if and
     only if  
     \[ 
     \max\limits_{s \in\{s_{1}, s_{2}\}}  \dfrac{c}{ y_1(s)}  \leq   
     F_{12} < \max\limits_{s \in\{s_{1}, s_{2}\}} \dfrac{c}{  y_2(s)}  ;
     \]  
   \item\label{fact:redundant-2groups} a unique pairwise stable structure consisting of exact $k~ (2\leq k\leq \min\{s_{1},s_{2}\})$  interconnections if and only if 
     \begin{multline}
    \max\limits_{s \in\{s_{1}, s_{2}\}} \dfrac{c}{ y_2(s)-(k-2)\delta y_{3}}  \\
    < F_{12} 
      < \max\limits_{s \in\{s_{1}, s_{2}\}} \dfrac{c}{ y_2(s)-(k-1)\delta y_{3}}.
     \end{multline}
   \item\label{fact:comember-2groups} a unique
     pairwise stable structure consisting of maximally connected cliques if and
     only if 
     \[\dfrac{c}{ y_3 } < F_{12} \leq 1.\]      
  \end{enumerate} 
  Moreover, starting from any state in the invariant set $S$, the Formation Dynamics (introduced in Section~\ref{sec:model}) converges to the corresponding pairwise stable structure. 
 \end{theorem}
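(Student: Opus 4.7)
The plan is to use the assumption $c<y_3$ together with Theorem \ref{thm: Pairwise Stability of Cliques} to reduce the question to the analysis of interconnection edges between two already-formed cliques $P_1$ and $P_2$. Once both groups are cliques, for any $i\in P_1$ and $j\in P_2$ the distance $d(i,j)$ depends only on the interconnection structure: if $i$ is an endpoint of some interconnection then $d(i,j)\in\{1,2\}$ via the clique on $P_2$; otherwise $d(i,j)\in\{2,3\}$ depending on whether $j$ is an endpoint of some interconnection. The three threshold functions then arise naturally as marginal utilities in $F_{12}$ units: $y_1(s)$ is the gain to an endpoint forming its \emph{first} interconnection to an opposite clique of size $s$; $y_2(s)$ is the gain of joining an opposite clique of size $s$ when exactly one other node on your own side is already connected via an interconnection; and $y_3$ is the gain of adding a redundant second edge at a node that already has one.

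For part (i), the first interconnection is mutually beneficial exactly when $F_{12}\ge\max_s c/y_1(s)$; given it exists, any second edge that reuses an endpoint has marginal $F_{12}y_3-c$ at that endpoint, while any second edge at two fresh endpoints has marginal $F_{12}y_2(s)-c$ at each endpoint. The stated upper bound $F_{12}<\max_s c/y_2(s)$, which is strictly below $c/y_3$, therefore rules out every second edge and makes the minimally connected structure the unique stable one; the dynamics converges to it because the first pair of cross-cluster nodes that is sampled forms the unique bridging edge and no further interconnection is added. Part (iii) is analogous: in the maximally connected interconnection every edge $(i,j)$ satisfies $d(i,j)=2$ after removal, through any other endpoint in $P_2$, so the marginal loss of removing any edge is $F_{12}y_3-c$, strictly positive iff $F_{12}>c/y_3$; under the same hypothesis every missing interconnection has marginal benefit at least $F_{12}y_3-c>0$ at each endpoint, so the dynamics fills in every edge in turn.

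The main obstacle is part (ii), where I must first establish that the unique stable structure with $k$ interconnections is a \emph{matching} on the bipartite pair $(P_1,P_2)$, and only then match thresholds. If some node $v\in P_\alpha$ had two interconnections $(v,j),(v,j')$, then removing $(v,j)$ leaves $d(v,j)=2$ via $j'$ and the clique on $P_\beta$, so $v$'s marginal loss is only $F_{12}y_3$; using the identity $y_2(s)-(k-1)\delta y_3=y_3\bigl(1+(s-k)\delta\bigr)$, the stated upper bound on $F_{12}$ forces $F_{12}y_3<c$, so $v$ strictly benefits from severing and the shared-endpoint configuration is unstable. Hence the stable structure must be a matching with $k\le\min\{s_1,s_2\}$. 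A direct distance count in the matching then gives marginal $F_{12}\bigl[y_2(s)-(k-1)\delta y_3\bigr]-c$ to each new endpoint for adding the $(k+1)$th matching edge (because $k$ opposite-side nodes are already at distance $2$), and marginal loss $F_{12}\bigl[y_2(s)-(k-2)\delta y_3\bigr]-c$ at each endpoint for removing an existing matching edge (which pushes its partner from distance $1$ to distance $3$ and $s-k$ other opposite-side nodes from $2$ to $3$). Requiring both endpoints of every existing edge to strictly lose, and at least one endpoint of any candidate $(k+1)$th edge to not strictly gain, yields precisely the stated parameter range. Convergence then follows as in part (i): starting from the invariant set $S$, each sampled pair of fresh unmatched nodes extends the matching, this happens at most $k$ times, and thereafter Lemma \ref{lem:stable-by-dynamics} certifies that the network is pairwise stable.
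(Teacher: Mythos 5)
Your proposal is correct and follows essentially the same route as the paper's proof: reduce to two completed cliques via Theorem~\ref{thm: Pairwise Stability of Cliques}, compute the marginal utility of each candidate interconnection at both of its endpoints, and identify the thresholds $\max_s c/y_1(s)$, $\max_s c/\big(y_2(s)-(k-2)\delta y_3\big)$, and $c/y_3$ from the resulting distance counts. Your explicit argument that in regime~\ref{fact:redundant-2groups} a shared-endpoint interconnection is severed (so the stable configuration must be a matching), via the identity $y_2(s)-(k-1)\delta y_3=y_3\big(1+(s-k)\delta\big)\geq y_3$, makes precise a step the paper leaves implicit, and the only looseness you share with the paper is in the convergence claim, since interconnections sampled before the cliques complete need not form immediately (and extra ones formed transiently must later be removed).
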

\begin{figure}[ht]
	\includegraphics[width=1\linewidth]{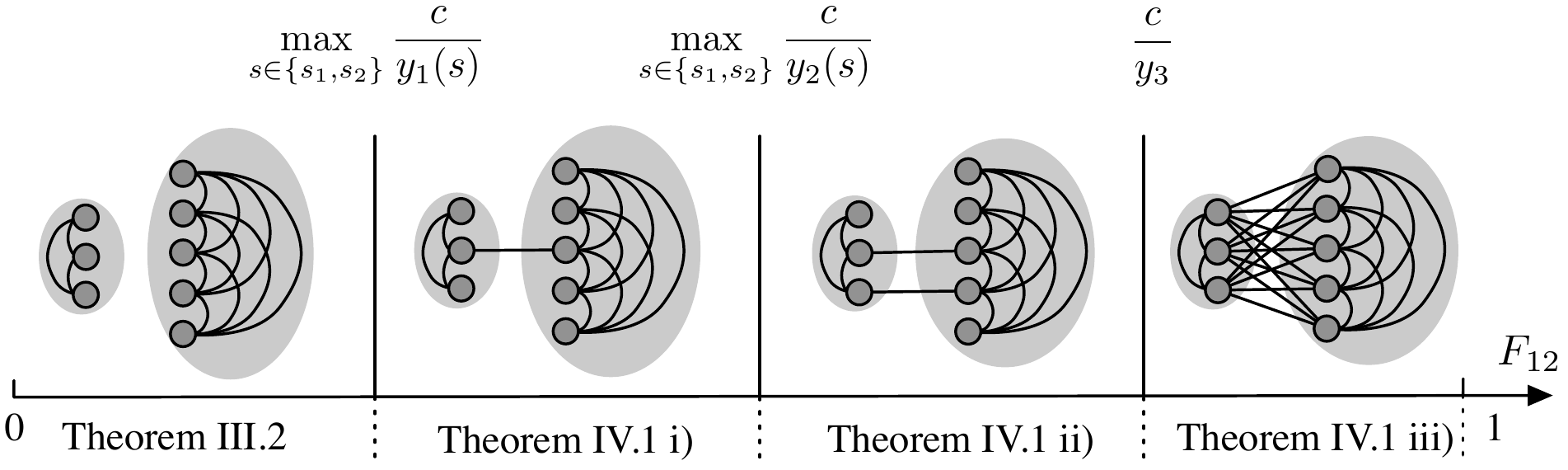}
        \caption{An illustration of the four ranges of parameter space in Theorems ~\ref{thm:notequal-Disjoint} and~\ref{thm:existence-convergence-two}.} \label{fig:spectrum}  
\end{figure} 
\begin{proof}

From Theorem \ref{thm:notequal-Disjoint}, we know that at least one interconnection will be formed under Formation Dynamics if and only if $ F_{12} \geq \max\limits_{s \in\{s_{1}, s_{2}\}} \dfrac{c}{ y_1(s)} $. Without loss of generality, suppose that $\Edges$ contain two cliques and at least one interconnection between them. Assume that agents $i$ and $j$, respectively from $P_{1}$ and $P_{2}$, are connected in $\Edges$. 

Take  agents $\hat{i} $ from $P_{1}$ and $\hat j$ from $P_{2}$. For $\hat{i} \neq i, \hat j=j$, we have
$
  U_{\hat i}(\Edges \cup \{ (\hat{i},j) \} )=  (s_{1} -1) \delta+ F_{12} \delta + (s_{2} -1) F_{12} \delta^2 - s_{1} c
$
and 
$
 U_{\hat{i}}(\Edges \setminus \{ (\hat{i},j) \} )=(s_{1} -1) (\delta- c) +  F_{12} \delta^2 + (s_{2} -1) F_{12} \delta^3,
$
implying $ U_{\hat i}(\Edges \cup \{ (\hat{i},j) \} )> U_{\hat{i}}(\Edges \setminus \{ (\hat{i},j)\} )   
   \iff F_{12} > \dfrac{c}{y_2(s_2)}.$
   From 
$
 U_{j}(\Edges \cup \{ (\hat{i},j)\} )=  (s_{2} -1) \delta +2 F_{12} \delta + (s_{2} -2) F_{12} \delta^2 +(s_{2} +1) c
$
and 
$
 U_{j}(\Edges \setminus \{ (\hat{i},j )\} )=  (s_{2} -1) \delta + F_{12} \delta + (s_{2} -1) F_{12} \delta^2 +s_{2} c
$,
we obtain:  
$ U_j(\Edges \cup \{ (\hat{i},j) \} ) > U_j(\Edges \setminus \{ (\hat{i},j)\} )   
   \iff F_{12} > \dfrac{c}{y_{3}}.
$
Then, from $y_2(s)>y_{3}>0$, we conclude that an additional interconnection $\{(\hat i, j)\}$ is added and maintained if and only if  $F_{12} > \dfrac{c}{y_{3}}$. 
Similarly, an additional interconnection $\{( i,\hat j)\}$ is added and maintained if and only if  $F_{12}> \dfrac{c}{y_{3}}$. 
For $\hat i \neq i, \hat j\neq j$, using a similar argument, we obtain that $U_{\hat i}(\Edges \cup \{ (\hat{i},\hat j ) \} )> U_{\hat{i}}(\Edges \setminus \{ (\hat{i},\hat j )\} )   
   \iff F_{12} > \dfrac{c}{y_2(s_2)}$ and 
   $U_{\hat j} (\Edges \cup \{ (\hat{i},\hat j) \} )> U_{\hat j} (\Edges \setminus \{( \hat{i},\hat j )\} )   
   \iff F_{12}> \dfrac{c}{y_2(s_1)}$; which means that an additional interconnection $\{(\hat i, \hat j)\} $ is added and maintained if and only if $ F_{12} > \max\limits_{s \in\{s_{1}, s_{2}\}} \dfrac{c}{ y_2(s)} $ (strictly holds when $s_{1} = s_{2}$). Thus, we conclude that at least two interconnections are added and maintained if and only if $F_{12} > \min\left\{\dfrac{c}{y_{3}},\max\limits_{s \in\{s_{1}, s_{2}\}} \dfrac{c}{ y_2(s)}\right\}=\max\limits_{s \in\{s_{1}, s_{2}\}} \dfrac{c}{ y_2(s)}$  (strictly holds when $s_{1} = s_{2}$.)  
Therefore, the network contains precisely one interconnection if and only if $ \max\limits_{s \in\{s_{1}, s_{2}\}}  \dfrac{c}{ y_1(s)}  <  
     F_{12} < \max\limits_{s \in\{s_{1}, s_{2}\}} \dfrac{c}{  y_2(s)} $. Moreover, from the moment when two group form cliques and this unique interconnection builds, the network will not change.  According to Lemma \ref{lem:stable-by-dynamics}, this concludes the proof of statement~\ref{fact:bridge-2groups}. 

To prove~\ref{fact:comember-2groups}, assume that $F_{12}>  \max\limits_{s \in\{s_{1}, s_{2}\}} \dfrac{c}{ y_2(s)} $. We have shown above that $\Edges$ contains at least two interconnections between two cliques. As a result, for any agent $\hat i$ from $P_{1}$ and $\hat j$ from $P_{2}$, the distance between $\hat i$ and $\hat j$ in $\Edges \setminus \{(\hat i, \hat j)\}$ is equal to either 2 or 3.  If it is equal to 2, then  $U_{\hat i}(\Edges \cup \{ (\hat{i}, \hat j ) \} -U_{\hat{i}}(\Edges \setminus \{ (\hat{i}, \hat j )\}=U_{\hat j}(\Edges \cup \{( \hat{i}, \hat j )\} -U_{\hat{j}}(\Edges \setminus \{ (\hat{i}, \hat j )\}=F_{12}(\delta-\delta^{2})-c$; and if it is equal to 3,  then $U_{\hat i}(\Edges \cup \{( \hat{i}, \hat j  )\} -U_{\hat{i}}(\Edges \setminus \{ (\hat{i}, \hat j) \}=U_{\hat j}(\Edges \cup \{ (\hat{i}, \hat j )\} -U_{\hat{j}}(\Edges \setminus \{ (\hat{i}, \hat j )\}=F_{12}(\delta-\delta^{3})-c$. Interconnection $\{(\hat i, \hat j)\}$ is added and maintained if and only if $U_{\hat i}(\Edges \cup \{ (\hat{i},\hat j ) \} -U_{\hat{i}}(\Edges \setminus \{ (\hat{i}, \hat j) \}> 0$ and $U_{\hat j}(\Edges \cup \{( \hat{i}, \hat j) \} -U_{\hat{j}}(\Edges \setminus \{ (\hat{i}, \hat j )\}>0$. 
Thus, we conclude that $\{(\hat i, \hat j)\}$ is added and maintained if and only if $F_{12}>\max\left\{\dfrac{c}{\delta-\delta^{2}},\dfrac{c}{\delta-\delta^{3}}\right\}$. Since $\max\left\{\dfrac{c}{\delta-\delta^{2}},\dfrac{c}{\delta-\delta^{3}}\right\}=\dfrac{c}{\delta-\delta^{2}}$ for $0<\delta<1$,  $\{(\hat i, \hat j)\}$ is added and maintained maintained if and only if $F_{12}>\dfrac{c}{\delta-\delta^{2}}=\dfrac{c}{y_{3}}$. Therefore, the network will not be changed when all agents link with each other. By Lemma \ref{lem:stable-by-dynamics}, this concludes the proof of~\ref{fact:comember-2groups}.

From statements \ref{fact:bridge-2groups} and \ref{fact:comember-2groups}, we know that the pairwise stable structure contains at least 2 but not fully numbers of interconnections if $\max\limits_{s \in\{s_{1}, s_{2}\}} \dfrac{c}{ y_2(s)} < F_{12} < \dfrac{c}{ y_3 }$.  Suppose that $(i_{1},j_{1}), \dots, (i_{k-1},j_{k-1})$ are $k-1$ interconnections between $P_{1}$ and $P_{2}$. Take  agents $\hat{i} $ from $P_{1}$ and $\hat j$ from $P_{2}$. Similar to the analysis in the proof of statement~\ref{fact:bridge-2groups}, we have the following two cases:
\begin{enumerate}[(a)]
\item For $\hat{i} \notin \{i_{1},\dots,i_{k-1}\}, \hat j\notin \{j_{1},\dots, j_{k-1}\}$, we have
\[
\begin{aligned}
   U_{\hat i}(\Edges \cup \{ (\hat{i},j) \} )&- U_{\hat{i}}(\Edges \setminus \{ (\hat{i},j) \} )\\
 &\quad = F_{12}(y_{2}(s_{2})-(k-2)\delta y_{3}), \text{ and}\\
    U_{\hat j}(\Edges \cup \{ (\hat{i},j) \} ) &- U_{\hat{i}}(\Edges \setminus \{ (\hat{i},j) \} )\\
&\quad  = F_{12}(y_{2}(s_{1})-(k-2)\delta y_{3}),
\end{aligned}
\]
implying 
\[\begin{aligned}
U_{\hat i}(\Edges \cup \{ (\hat{i},j) \} )&>U_{\hat{i}}(\Edges \setminus \{ (\hat{i},j)\} ), \text {~and~}  \\
U_{\hat j}(\Edges \cup \{ (\hat{i},j) \} )&> U_{\hat{j}}(\Edges \setminus \{ (\hat{i},j)\} ) \\
  \iff \qquad F_{12} &> \max\limits_{s \in\{s_{1}, s_{2}\}}  \dfrac{c}{y_2(s)-(k-2)\delta y_{3}}.
   \end{aligned}\]
\item For $\hat{i} \in \{i_{1},\dots,i_{k-1}\}, \hat j\notin \{j_{1},\dots, j_{k-1}\}$ or  $\hat{i} \notin \{i_{1},\dots,i_{k-1}\}, \hat j\in \{j_{1},\dots, j_{k-1}\}$, we have 
\[\begin{aligned}
 U_{\hat i}(\Edges \cup \{ (\hat{i},j) \} )&> U_{\hat{i}}(\Edges \setminus \{ (\hat{i},j)\} ) 
   \text {~ and~}\\
 U_{\hat j}(\Edges \cup \{ (\hat{i},j) \} )&>U_{\hat{j}}(\Edges \setminus \{ (\hat{i},j)\} ) \\
    \iff \qquad F_{12} &>   \dfrac{c}{y_{3}}.   
   \end{aligned}\]
\end{enumerate}
Therefore,  we conclude then $k-th$ interconnection is added and maintained if and only if  $F_{12} >\max\limits_{s \in\{s_{1}, s_{2}\}}  \dfrac{c}{y_2(s)-(k-2)\delta y_{3}}.$ Likewise, the $(k+1)-th$ interconnection is added and maintained if and only if  $F_{12} > \max\limits_{s \in\{s_{1}, s_{2}\}}  \dfrac{c}{y_2(s)-(k-1)\delta y_{3}}.$ It follows that the unique pair-wise stable structure has exact $k~ (2\leq k\leq \min\{s_{1},s_{2}\})$  interconnections if and only if $\max\limits_{s \in\{s_{1}, s_{2}\}} \dfrac{c}{ y_2(s)-(k-2)\delta y_{3}}< F_{12} < \max\limits_{s \in\{s_{1}, s_{2}\}} \dfrac{c}{ y_2(s)-(k-1)\delta y_{3}}$. 
This concludes the proof of~\ref{fact:redundant-2groups}.

 Finally, we complete the proof of Theorem~\ref{thm:existence-convergence-two} by proving the convergence statement. Since  $c<y_3$, from Theorem \ref{thm: Pairwise Stability of Cliques} we know that the network structure, regardless of the density of intra-group connections, consists of cliques of sizes $s_1$ and $s_2$.  From above analysis,  starting from invariable set $S$, intra-connections will increase one by one until one more intra-connection can not bring increasing of benefit for two players. Then, it follows from Lemma \ref{lem:stable-by-dynamics} that the network can not be changed from then on, i.e., the network will converge to the corresponding pairwise stable structure under the formation dynamics in Section~\ref{sec:model}. 
\end{proof}  

Theorem \ref{thm:existence-convergence-two} implies that if $F_{12}$ equals the boundaries, i.e., 
$
\max\limits_{s \in\{s_{1}, s_{2}\}} \dfrac{c}{y_2(s)}, \max\limits_{s \in\{s_{1}, s_{2}\}} \dfrac{c}{y_2(s)-\delta y_{3}}, \max\limits_{s \in\{s_{1}, s_{2}\}} \dfrac{c}{ y_2(s)-2\delta y_{3}}$, $\dots, \max\limits_{s \in\{s_{1}, s_{2}\}} \dfrac{c}{y_{3}},
$
 the pairwise stable structure is not unique. To show this, we provide the following example for a society consisting of 8 individuals.

\begin{example}{\label{example:non-unique-pwStable}}
 Suppose that individuals are partitioned into groups $P_{1}=\{1,2,3\}$ and $P_{2}=\{4,5,6,7,8\}$. Let $c=0.2$, $\delta=0.5$ and  $F_{12}=\max\limits_{s \in\{s_{1}, s_{2}\}} \dfrac{c}{y_2(s)}=0.4$. Assume at time $0$, each of the two groups have a line structure, illustrated in Fig~\ref{fig:stabilityatBoundary}. Now consider the following two processes:
\begin{enumerate}[Process A.]
\item At first, individuals 2 and 6 are chosen to play the game and interconnection $(2,6)$ is formed. Then, individuals 3 and 7 are chosen to play the game, and interconnection $(3,7)$ is also formed.  After that, all possible intra-connections are considered and formed. Finally, the network contains two interconnections and reaches pairwise stability.
\item At first, all possible intra-connections are considered and formed. Then individuals 2 and 6 are chosen to play the game and interconnection $(2,6)$ is formed. As a result, the network contains only 1 interconnection and reaches pairwise stability.
\end{enumerate}

\end{example}
\begin{figure}[ht]
	\centering
	\includegraphics[width=1\linewidth]{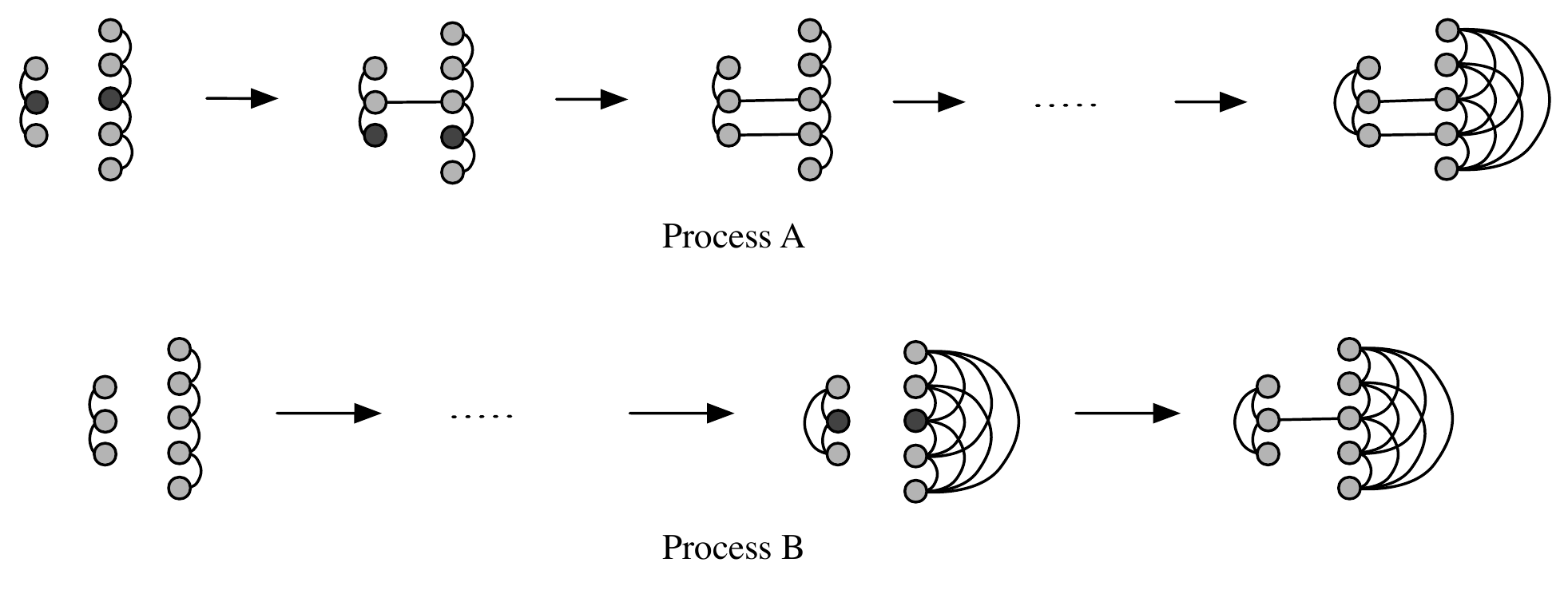}
        \caption{An illustration of two processes in Example~\ref{example:non-unique-pwStable}. At each step, the darker dots are chosen to play the game.} \label{fig:stabilityatBoundary}  
       \end{figure}

Fig.~\ref{fig:spectrum} illustrates the scenarios specified in Theorems ~\ref{thm:notequal-Disjoint} and~\ref{thm:existence-convergence-two} for $F_{12}$.

\begin{remark}{}

As the difference between group sizes $|s_2- s_1|$ increases, the bound for formation of bridges increases, and thus communication between groups becomes harder.
To see why this is true, for a society of $n=s_1+s_2$ individuals, without loss of generality, we assume that $s_1 \leq s_2.$.
We know $\max\limits_{s \in\{s_{1}, s_{2}\}} \dfrac{c}{ y_1(s)}= \dfrac{c}{ \min\limits_{s \in\{s_{1}, s_{2}\}} y_1(s)}= \dfrac{c}{ y_1(\min\{s_{1}, s_{2}\})}$, and since $\max\limits_{s \in\{s_{1}, s_{2}\}} \dfrac{c}{ y_1(s)}$ is a monotonically decreasing function of $s$, as $s_1$ increases, $|s_2-s_1|$, and therefore, $\max\limits_{s \in\{s_{1}, s_{2}\}} \dfrac{c}{ y_1(s)}$ decreases, and communication is facilitated. As illustrated in Fig.~\ref{fig:2x2_var}, for a society of fixed size, as the sizes of the two groups becomes closer to each other, the number of interconnections increases.
\end{remark}
\begin{figure}
	\begin{center} 
		{\includegraphics[width=0.9\linewidth]{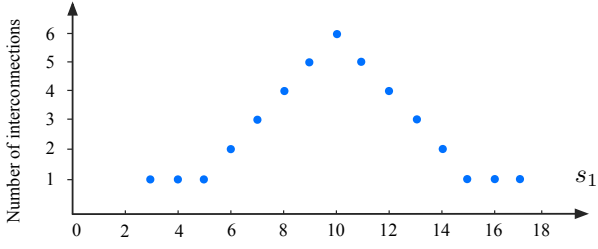}\label{fig:2x2-interC}}
		\caption{\small Plots of no. of interconnections with fixed $F_{12}=0.25$ and $n=20$ as a function of groups size $s_1$, $\delta=0.5$, $c=0.2$. Note that $s_2=n-s_1$, and thus the plot is symmetric about the line $s_1=\frac{n}{2}$ and maximized at $s_1=\frac{n}{2}$.}\label{fig:2x2_var}
	\end{center}
\end{figure}

\subsection{Efficiency}

We now study the efficiency of two group structure.

\begin{theorem}[Efficiency with Two groups]\label{thm:efficiency-two}
   Consider $n$ individuals partitioned into groups $P_1, P_2$ of sizes  $s_1$ and $s_2$ respectively. Under the assumption that $c<y_3$, the efficient structure consists of:
  \begin{enumerate}
  \item\label{fact:eff-disjoint-2groups} 
     disjoint union of cliques if and only if  
             \[ 0 \leq F_{12}  \leq  \dfrac{c\delta}{y_{1}(s_{1})y_1(s_{2}) } ;\]
  \item\label{fact:eff-bridge-2groups} 
       minimally connected cliques if and only if  
         \[
          \dfrac{c\delta}{y_{1}(s_{1})y_1(s_{2}) } 
            \leq   F_{12} \leq \dfrac{2c}{ y_{2}(s_{2})+y_{2}(s_{1})+(s_{1}+s_{2}-4)\delta y_3};
            \]
  \item\label{fact:eff-redundant-2groups} 
       redundantly connected cliques if  and only if 
               \[ \dfrac{2c}{ y_{2}(s_{2})+y_{2}(s_{1})+(s_{1}+s_{2}-4)\delta y_{3}}
                \leq F_{12} \leq \dfrac{c}{ y_3 };\]
   \item\label{fact:eff-comember-2groups} 
        maximally connected cliques if and   only if 
        \[\dfrac{c}{ y_3 }\leq F_{12} \leq 1.\]      
  \end{enumerate}  
\end{theorem}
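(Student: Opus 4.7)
Under the hypothesis $c<y_3$, Theorem~\ref{thm: Pairwise Stability of Cliques} forces each of $P_1,P_2$ to be a clique in any efficient structure, so it suffices to optimize over the bipartite interconnection graph $H\subseteq P_1\times P_2$. Writing
\[
v(\Edges)\;=\;C\;+\;F_{12}\,T(H)\;-\;2|H|\,c,
\]
where $C$ collects the $H$-independent intra-clique contribution and $T(H):=2\sum_{a\in P_1,\,b\in P_2}\delta^{d_{ab}(\Edges)}$, the task reduces to maximizing $F_{12}\,T(H)-2|H|c$ over bipartite graphs $H$.

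\textbf{Marginal computations and the three thresholds.} I enumerate cross-group distances for four representative configurations. First, $T(\emptyset)=0$. A single bridge yields $T=2y_1(s_1)y_1(s_2)/\delta$, using a direct count of the $s_1 s_2$ ordered cross-group pairs at distances $1,2,3$ together with the algebraic identity $\delta+(s_1+s_2-2)\delta^2+(s_1-1)(s_2-1)\delta^3 = y_1(s_1)y_1(s_2)/\delta$. Adding a second bridge whose endpoints are both new increases $T$ by exactly $y_2(s_1)+y_2(s_2)+(s_1+s_2-4)\delta y_3 = 2y_3[1+(s_1+s_2-3)\delta]$, and any shared-endpoint placement yields a strictly smaller increment. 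Filling in the final missing edge of the complete bipartite graph increases $T$ by $2y_3$, as exactly one pair's distance drops from $2$ to $1$. Dividing the pair-cost $2c$ by each of these three marginal gains reproduces precisely the thresholds $c\delta/(y_1(s_1)y_1(s_2))$, $2c/(y_2(s_1)+y_2(s_2)+(s_1+s_2-4)\delta y_3)$, and $c/y_3$ appearing in the statement.

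\textbf{Deriving the four ranges.} Letting $T^*(k):=\max_{|H|=k}T(H)$, the marginal computations give: the empty $H$ is efficient iff the first bridge does not pay for itself, i.e.\ $F_{12}\le c\delta/(y_1(s_1)y_1(s_2))$, which proves (i); a single bridge is efficient iff additionally the second bridge does not pay for itself, proving (ii); the complete bipartite $H$ is efficient iff the last edge pays for itself, i.e.\ $F_{12}\ge c/y_3$, proving (iv); and statement (iii) then follows by elimination in the intermediate range, since the efficient $H$ must satisfy $|H|\ge 2$ (by comparison with one bridge) and $|H|<s_1 s_2$ (by comparison with the maximal configuration). The necessity directions follow by reading each marginal comparison in reverse.

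\textbf{Main obstacle.} The delicate step is promoting each pointwise marginal comparison to a global one: why does knowing that a single extra bridge fails to improve $v$ rule out the possibility that adding two or more together does? This amounts to the submodularity of $T$ as a set function of $H$, namely that the marginal gain of adding any edge $e$ is no larger in a denser graph. I would verify this termwise, since for each ordered cross pair $(i,j)$ the reduction $\delta^{d_{ij}(G+e)}-\delta^{d_{ij}(G)}$ in the corresponding summand is monotone non-increasing in $G$ with respect to inclusion. Submodularity implies concavity of $T^*$ in $k$, and in particular $T^*(k)-T^*(1)\leq (k-1)(T^*(2)-T^*(1))$ and $T^*(s_1s_2)-T^*(k)\leq (s_1s_2-k)\cdot 2y_3$, which are exactly the inequalities needed to close the if-and-only-if in each of the four ranges. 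Combined with the easily checked ordering $c\delta/(y_1(s_1)y_1(s_2)) < 2c/(y_2(s_1)+y_2(s_2)+(s_1+s_2-4)\delta y_3) < c/y_3$ (a direct consequence of the preliminaries' inequalities $0<y_3<y_2(s)<y_1(s)$ for $s\ge 3$), this completes the characterization.
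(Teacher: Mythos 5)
Your proposal follows essentially the same route as the paper's proof: both reduce to optimizing the bipartite interconnection graph between two cliques, compute the marginal value of the first bridge (yielding $2F_{12}\,y_1(s_1)y_1(s_2)/\delta - 2c$), the second bridge in the three endpoint configurations (with the both-endpoints-new placement dominating), and a generic later edge (gain at least $2F_{12}y_3-2c$), and read off the identical three thresholds. Where you genuinely add something is in confronting the local-to-global step: the paper only compares adjacent configurations ($\Edges_0$ vs.\ $\Edges_0\cup\{(i,j)\}$, then $\Edges_1$ vs.\ $\Edges_1\cup\{(\hat i,\hat j)\}$) and implicitly assumes this suffices, whereas you make the diminishing-returns structure explicit via termwise submodularity of $T$. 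Your termwise claim is correct \emph{in this setting} precisely because both groups are cliques, so every cross-distance lies in $\{1,2,3,\infty\}$ and each summand's improvement is monotone non-increasing under edge addition; it would be worth saying explicitly that this is not a general-graph fact (shortest-path improvements are not submodular in arbitrary graphs) and that the concavity-type bound $T^*(k)-T^*(1)\le(k-1)(T^*(2)-T^*(1))$ also uses the symmetry that every single bridge attains $T^*(1)$.

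One concrete error: the inequality $T^*(s_1s_2)-T^*(k)\le(s_1s_2-k)\cdot 2y_3$ is stated in the wrong direction and is false as written (e.g.\ for $s_1=s_2=3$, $k=1$ it fails for all $\delta<1$). What statement~\ref{fact:eff-comember-2groups} actually requires is the reverse bound $T^*(s_1s_2)-T^*(k)\ge(s_1s_2-k)\cdot 2y_3$, so that $F_{12}\ge c/y_3$ forces the complete bipartite graph to dominate every $H_k$. Fortunately this follows immediately from your own submodularity observation: the marginal gain of any missing cross edge added to the complete-minus-one-edge graph is exactly $2y_3$, hence by diminishing returns its gain at any nonempty sparser $H$ is at least $2y_3$, and filling in the $s_1s_2-k$ missing edges one at a time gives the required lower bound. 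With that sign corrected, the argument closes all four ranges.
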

\begin{proof}
From Theorem \ref{thm: Pairwise Stability of Cliques}, we know that addition of an intra-connection results in increasing the payoff of both individuals involved in that link, as well as the social welfare, and removal of any intra-connection causes loss for both individuals and decreases social welfare. 
As a result, the efficient structure consists of two cliques. Suppose $i\in P_{1}$ and $j\in P_{2}$. Let $\Edges_{0}=\Edges_{K_{s_{1}}}\bigcup \Edges_{K_{s_{2}}}$ be the union of two cliques of sizes $s_1$ and $s_2$. We have 
\begin{equation}
\begin{aligned}
 U_{k}(\Edges_{0}\cup&\{(i,j)\}) -U_{k}(\Edges_{0}) 
\\
&= \left\{
\begin{aligned}
& F_{12}\big(\delta+(s_{2}-1)\delta^{2}\big)-c, && k=i;\\
& F_{12}\big(\delta+(s_{1}-1)\delta^{2}\big)-c, && k=j;\\
& F_{12}\big(\delta^{2}+(s_{2}-1)\delta^{3}\big), && k\in P_{1}, k\neq i;\\
& F_{12}\big(\delta^{2}+(s_{1}-1)\delta^{3}\big), && k\in P_{2}, k\neq j.
\end{aligned}
\right.
\end{aligned}
\end{equation}
It follows that 
\begin{multline*}
v\big((\Edges_{0}\cup \{(i,j)\}\big) -v(\Edges_{0})
=2F_{12}[\delta+(s_{1}-1)\delta^{2}+(s_{2}-1)\delta^{2}\\
+(s_{1}-1)(s_{2}-1)\delta^{3}]-2c.
\end{multline*}
$\Edges_{0}$ is the efficient structure if and only if $v\big((\Edges_{0}\cup \{(i,j)\}\big) -v(\Edges_{0})\leq 0$, which is equivalent to:  
\begin{equation*}\label{eqn:F12-efficent-bound-i}
\begin{aligned}
F_{12} &\leq   \dfrac{c}{ \delta+(s_{1}-1)\delta^{2}+(s_{2}-1)\delta^{2} +(s_{1}-1)(s_{2}-1)\delta^{3}} \\
&= \dfrac{c \delta}{ y_1(s_{1})y_1(s_{2})}.
\end{aligned}\end{equation*}
This concludes the proof of~\ref{fact:eff-disjoint-2groups}.

Now let $\Edges_{1}=\Edges_{0}\bigcup \{(i,j)\}$.  As elaborated above,  $v(\Edges_{1}) -v(\Edges_{0})\geq 0$ if and only if 
\[\begin{aligned}
F_{12} &\geq   \dfrac{c}{\delta+(s_{1}-1)\delta^{2}+(s_{2}-1)\delta^{2} +(s_{1}-1)(s_{2}-1)\delta^{3}}\\
&=\dfrac{c\delta}{y_{1}(s_{1})y_1(s_{2}) } .
\end{aligned}\]
Suppose  $\hat i \in P_{1}$ and $\hat j\in P_{2}$. For the case of $\hat i \neq i$ and $\hat j\neq j$, we have 
\begin{equation*}
U_{k}(\Edges_{1}\cup \{(\hat i,\hat j)\})-U_{k}(\Edges_{1})=
 \left\{
\begin{aligned}
& F_{12}y_{2}(s_{2})-c, && k=\hat i \\
&F_{12}y_{2}(s_{1})-c, && k=\hat j\\
& 0, && k =  i, j \\
& F_{12}(\delta^{2}-\delta^{3}), && \text{otherwise.}
\end{aligned}
\right.
\end{equation*}
For the case of $\hat i =i$ and $\hat j\neq j$, we have 
\begin{equation*}
\begin{aligned}
U_{k}(\Edges_{1}\cup&\{( i,\hat j)\})-U_{k}(\Edges_{1})\\
&= \left\{
\begin{aligned}
& F_{12}(\delta-\delta^{2})-c, && k=	i, \\
&F_{12}y_{2}(s_{1})-c, && k=\hat j,\\
& 0, && k\in P_{2} ~\text{and}~ k\neq \hat j ,\\
& F_{12}(\delta^{2}-\delta^{3}), && k\in P_{1} ~\text{and}~ k\neq i.
\end{aligned}
\right.
\end{aligned}
\end{equation*}
For the case of $\hat i \neq i$ and $\hat j=j$, we have 
\begin{equation*}
\begin{aligned}
U_{k}(\Edges_{1}\cup&\{(\hat i, j)\})-U_{k}(\Edges_{1})\\
&= \left\{
\begin{aligned}
& F_{12}(\delta-\delta^{2})-c, && k=	j, \\
&F_{12}y_{2}(s_{2})-c, && k=\hat i,\\
& 0, && k\in P_{1} ~\text{and}~ k\neq \hat i,\\
& F_{12}(\delta^{2}-\delta^{3}), && k\in P_{2} ~\text{and}~ k\neq j.
\end{aligned}
\right.
\end{aligned}
\end{equation*}
It follows that 
\begin{equation*}
\begin{aligned}
&v(\Edges_{1}\cup \{(\hat i,\hat j)\})-v(\Edges_{1})\\
&=\left\{
\begin{aligned}
& F_{12}[y_{2}(s_{2})+y_{2}(s_{1})+(s_{1}+s_{2}-4)(\delta^{2}-\delta^{3})]-2c, \\
&\quad  \quad \quad \quad \quad\quad \quad     \hat j \neq j, \hat i \neq i,\\
&2F_{12}y_{2}(s_{1})-2c , \hat i  =  i,\hat j \neq j,\\
&2F_{12}y_{2}(s_{2})-2c ,  \hat i  \neq i,\hat j = j.\\
\end{aligned}
\right.
\end{aligned}
\end{equation*}

Since $y_2(s)$ is a monotonically increasing function of $s$ and  
\[ \begin{aligned}
&\dfrac{2c}{ y_{2}(s_{2})+y_{2}(s_{1})+(s_{1}+s_{2}-4)(\delta^{2}-\delta^{3})}\\
&\qquad=\dfrac{c}{y_{2}(s_{1})+(s_{2}-2)(\delta^{2}-\delta^{3})}\\
&\qquad=\dfrac{c}{y_{2}(s_{2})+(s_{1}-2)(\delta^{2}-\delta^{3})},
\end{aligned}\]
 we have
\begin{multline*}
\dfrac{2c}{ y_{2}(s_{2})+y_{2}(s_{1})+(s_{1}+s_{2}-4)(\delta^{2}-\delta^{3})}\\
< \min\limits_{s \in\{s_{1}, s_{2}\}}\dfrac{c}{y_{2}(s)}.
\end{multline*}
We then conclude that $v(\Edges_{2}\cup \{(\hat i,\hat j)\})-v(\Edges_{2})\leq 0$ if and only if $F_{12} \leq \dfrac{2c}{ y_{2}(s_{2})+y_{2}(s_{1})+(s_{1}+s_{2}-4)\delta y_3}$.  Statement~\ref{fact:eff-bridge-2groups} follows accordingly.

Assume $F_{12}\geq  \max\limits_{s \in\{s_{1}, s_{2}\}} \dfrac{c}{ y_2(s)} $. From statement~\ref{fact:eff-bridge-2groups}, we know that the efficient structure has at least two interconnections. For any agent $\hat i$ from $P_{1}$ and $\hat j$ from $P_{2}$, as elaborated in the proof of Theorem \ref{thm:existence-convergence-two}, one has
\[
v(\Edges \cup \{ (\hat{i},\hat j ) \} -v(\Edges \setminus \{ (\hat{i},\hat j) \}\geq 2[F_{12}(\delta-\delta^{2})-c],
\]
which concludes statements \ref{fact:eff-redundant-2groups} and \ref{fact:eff-comember-2groups}.

\end{proof} 

\begin{figure*}
	\begin{center} 
		\subfloat[Ranges of parameter space for high value benefit ($\delta \geq  \max\limits_{s \in\{s_{1}, s_{2}\}} \dfrac{s-3}{n-3}$)]{\includegraphics[width=0.9\linewidth]{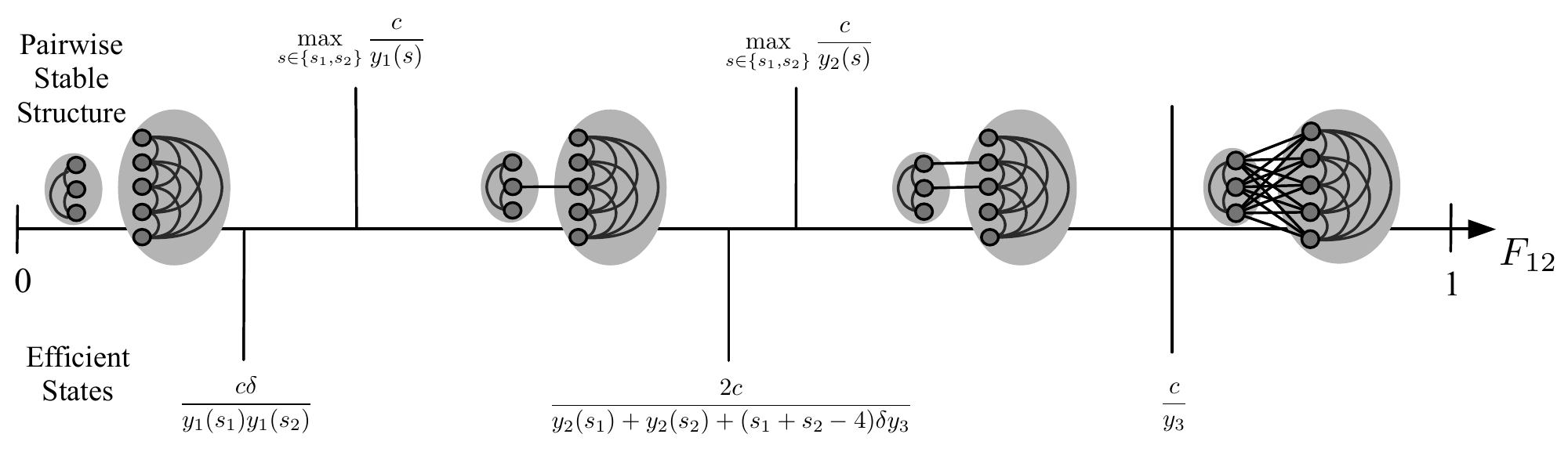}\label{fig:}}\\
		\subfloat[Ranges of parameter space for low value benefit ($\delta <  \max\limits_{s \in\{s_{1}, s_{2}\}} \dfrac{s-3}{n-3}$)]{\includegraphics[width=0.9\linewidth]{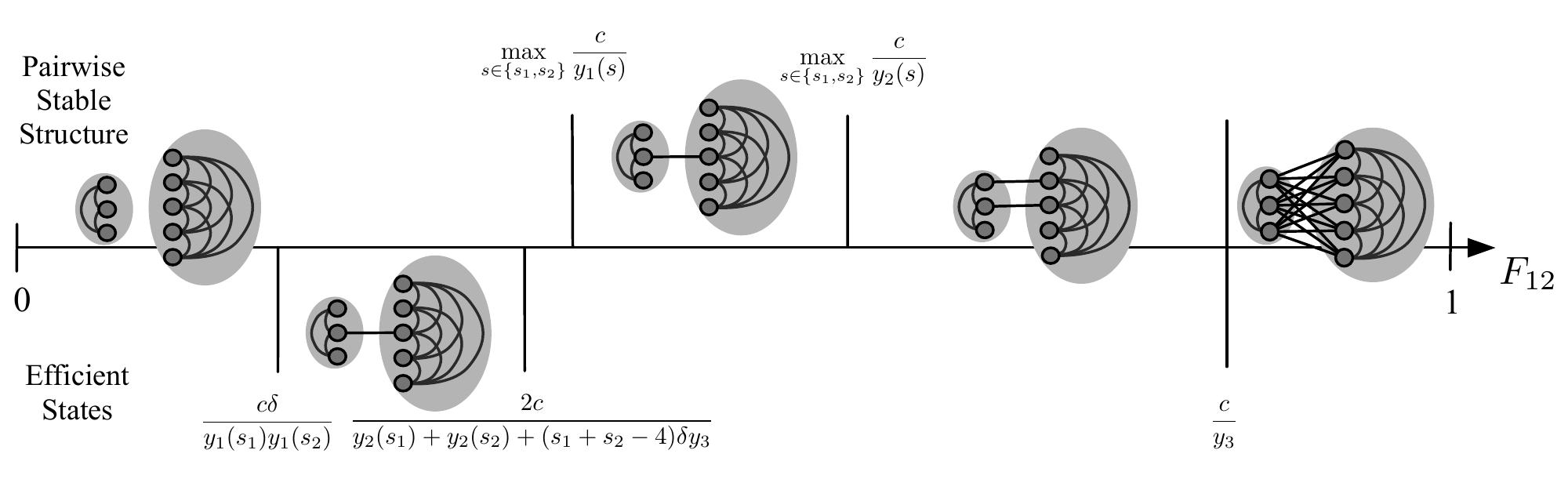}\label{fig:}}
		\caption{\small An illustration of the ranges of parameter space in Theorems~\ref{thm:existence-convergence-two} and~\ref{thm:efficiency-two}. The intervals indicated above the horizontal line refer to bounds of pairwise stability and those below the horizontal line refer to bounds of efficiency.}
		\label{fig:efficiency-stability-2groups}
	\end{center}
\end{figure*}

From the properties of $y_{1}(s)$, $y_{2}(s)$, and $y_{3}$, we have:
\begin{enumerate}
\item if $\delta \geq  \max\limits_{s \in\{s_{1}, s_{2}\}} \dfrac{s-3}{n-3}$, then
\[\begin{aligned}
 \dfrac{c\delta}{y_{1}(s_{1})y_1(s_{2}) }    
<&
 \max\limits_{s \in\{s_{1}, s_{2}\}} \dfrac{c}{ y_1(s)} \\
  \leq &
   \dfrac{2c}{ y_{2}(s_{2})+y_{2}(s_{1})+(s_{1}+s_{2}-4)\delta y_3} \\
 <&
  \max\limits_{s \in\{s_{1}, s_{2}\}} \dfrac{c}{  y_2(s)};
\end{aligned}
\]
\item  if $\delta <  \max\limits_{s \in\{s_{1}, s_{2}\}} \dfrac{s-3}{n-3}$, then
\[\begin{aligned}
 \dfrac{c\delta}{y_{1}(s_{1})y_1(s_{2}) }   & < \dfrac{2c}{ y_{2}(s_{2})+y_{2}(s_{1})+(s_{1}+s_{2}-4)\delta y_3} 
 \\&<
 \max\limits_{s \in\{s_{1}, s_{2}\}} \dfrac{c}{ y_1(s)}\\
 & <
  \max\limits_{s \in\{s_{1}, s_{2}\}} \dfrac{c}{  y_2(s)}.
\end{aligned}\]
\end{enumerate}
From Theorems \ref{thm:existence-convergence-two} and \ref{thm:efficiency-two}, we can directly obtain the conditions for the equivalence between pairwise stability and efficiency.

\begin{corollary}\label{the: equivalence-stability-efficiency-two}
Consider $n$ individuals partitioned into groups, $P_1$ and $P_2$  of sizes  $s_1$ and $s_2$ respectively. Under the assumption $c<y_3$, the efficient structure has equal or more interconnections than the pairwise stable structure. Moreover, the efficient structure is the pairwise stable structure if 
\begin{enumerate}
\item for $\delta \geq  \max\limits_{s \in\{s_{1}, s_{2}\}} \dfrac{s-3}{n-3},$
\[\begin{aligned}
&F_{12}\in \left[0, \dfrac{c\delta}{y_{1}(s_{1})y_1(s_{2}) } \right] \\
&\bigcup  \left[ \max\limits_{s \in\{s_{1}, s_{2}\}} \dfrac{c}{ y_1(s)} ,  \dfrac{2c}{ y_{2}(s_{2})+y_{2}(s_{1})+(s_{1}+s_{2}-4)\delta y_3}\right] \\
&\bigcup 
\left[ \dfrac{c}{  y_3},1\right], \text{ or}
\end{aligned}\] 
\item  for $ \delta <  \max\limits_{s \in\{s_{1}, s_{2}\}} \dfrac{s-3}{n-3}, $

$F_{12}\in \left[0,  \dfrac{c\delta}{y_{1}(s_{1})y_1(s_{2}) }  \right] \bigcup 
\left[\dfrac{c}{  y_3},1\right], $
\end{enumerate}

\end{corollary}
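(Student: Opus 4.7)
The plan is to read off the corollary from Theorems~\ref{thm:existence-convergence-two} and~\ref{thm:efficiency-two} by placing their thresholds on the common axis $F_{12}\in[0,1]$. Introduce the shorthand $T_1 := c\delta/(y_1(s_1)y_1(s_2))$, $T_2 := \max_{s\in\{s_1,s_2\}} c/y_1(s)$, $T_3 := 2c/(y_2(s_1)+y_2(s_2)+(s_1+s_2-4)\delta y_3)$, $T_4 := \max_{s\in\{s_1,s_2\}} c/y_2(s)$, and $T_5 := c/y_3$. The first step is to establish the two chain orderings listed just above the corollary statement. The inequalities $T_1<T_2$, $T_3<T_4$, and $T_4<T_5$ follow directly from $y_2(s)=(1-\delta)y_1(s)$, positivity of $(s-1)\delta^2$, and $y_1(s)>\delta$, so the pivotal comparison is $T_2$ versus $T_3$.

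Assume without loss of generality $s_1\leq s_2$, so $T_2 = c/y_1(s_1)$. Substituting $y_2(s)=(1-\delta)y_1(s)$ and $\delta y_3 = \delta^2(1-\delta)$, and using $s_1+s_2=n$, the denominator of $T_3$ simplifies to $2(1-\delta)(\delta+(n-3)\delta^2)$. A short manipulation then yields $T_2 \leq T_3 \iff \delta \geq (s_2-3)/(n-3) = \max_{s\in\{s_1,s_2\}}(s-3)/(n-3)$, which is precisely the dichotomy that separates the high-$\delta$ and low-$\delta$ cases of the corollary.

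The second step is to apply both theorems to the partition of $[0,1]$ produced by these five thresholds. In the high-$\delta$ regime the ordering is $T_1<T_2\leq T_3<T_4<T_5$: on $[0,T_1]$ both theorems yield disjoint cliques, on $[T_2,T_3]$ both yield minimally connected cliques, and on $[T_5,1]$ both yield maximally connected cliques. On the intermediate intervals $[T_1,T_2]$, $[T_3,T_4]$, and $[T_4,T_5]$ one theorem has advanced to the next regime while the other has not, so the structures differ. In the low-$\delta$ regime the ordering becomes $T_1<T_3<T_2<T_4<T_5$; the bridge-overlap interval $[T_2,T_3]$ collapses, leaving only $[0,T_1]$ and $[T_5,1]$ as coincidence regions. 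The monotonicity statement (efficient has at least as many interconnections as pairwise stable) follows from the same inequalities $T_1\leq T_2$ and $T_3\leq T_4$ together with the shared maximal threshold $T_5$, after a case-by-case sweep through the partition.

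I expect the only nontrivial work to be the algebraic verification of the pivot equivalence $T_2\leq T_3 \iff \delta\geq\max_s(s-3)/(n-3)$, which requires carefully cancelling the $(1-\delta)$ factors and tracking $n=s_1+s_2$. Once that is done, the remainder is a finite bookkeeping exercise over the partition of $[0,1]$, together with a mechanical check that in each coincidence region the pairwise stable and efficient structures have identical edge sets (the empty interconnection, a single bridge, or the complete bipartite pattern, respectively).
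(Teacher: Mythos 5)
Your proposal is correct and follows essentially the same route as the paper: the paper also derives the corollary directly from Theorems~\ref{thm:existence-convergence-two} and~\ref{thm:efficiency-two} by ordering the five thresholds on $[0,1]$ and splitting into the two regimes according to whether $\delta \geq \max_{s}\frac{s-3}{n-3}$, which is exactly your pivot comparison $T_2$ versus $T_3$ (and your simplification of the denominator of $T_3$ to $2(1-\delta)(\delta+(n-3)\delta^2)$ checks out). If anything, you supply more detail than the paper, which states the two inequality chains without verification and declares the corollary immediate.
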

We illustrate the compatibility (and incompatibility) between the pairwise stable and efficient structures in Fig.~\ref{fig:efficiency-stability-2groups}. The intersections of efficient and pairwise stable structures are highlighted.

\begin{corollary}
Consider $n$ individuals partitioned into groups $P_1$ and $P_2$  of sizes  $s_1$ and $s_2$, respectively. Under the assumption $c<y_3$,  if the efficient structure is not pairwise stable, then the efficient structure has more interconnections than the pairwise stable structure.
\end{corollary}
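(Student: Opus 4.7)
The plan is to reduce the corollary to a pointwise monotonicity statement
\[
k_{\mathrm{eff}}(F_{12}) \;\geq\; k_{\mathrm{pw}}(F_{12}) \qquad \text{for every } F_{12}\in[0,1],
\]
where $k_{\mathrm{eff}}$ and $k_{\mathrm{pw}}$ denote the number of interconnections in the efficient and pairwise stable network, respectively. Theorems~\ref{thm: Pairwise Stability of Cliques}, \ref{thm:notequal-Disjoint}, \ref{thm:existence-convergence-two}, and~\ref{thm:efficiency-two} already guarantee that both structures consist of the two cliques $K_{s_1}$ and $K_{s_2}$ plus some set of interconnections, so if the two structures disagree the only source of discrepancy is the interconnection count, and monotonicity then forces $k_{\mathrm{eff}}>k_{\mathrm{pw}}$.

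For the three outermost transitions---disjoint $\to$ bridge, bridge $\to$ first redundant, and redundant $\to$ maximal---the monotonicity is essentially already embedded in the threshold orderings displayed just above Corollary~\ref{the: equivalence-stability-efficiency-two}: each efficient transition threshold lies at or below the corresponding pairwise stability threshold, in both the high-benefit and low-benefit regimes for $\delta$. The only small inequality still to verify is $c\delta/[y_1(s_1)y_1(s_2)]\leq c/\min_{s\in\{s_1,s_2\}} y_1(s)$, which reduces to $\delta\leq \max_{s\in\{s_1,s_2\}} y_1(s)$ and is immediate from $y_1(s)\geq \delta+2\delta^{2}$. These three comparisons already settle the corollary everywhere outside the redundant range.

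The main obstacle is the redundant range $\bigl(\max_{s\in\{s_1,s_2\}} c/y_2(s),\ c/y_3\bigr)$, where Theorem~\ref{thm:efficiency-two} lumps all $k\geq 2$ configurations together and does not record per-$k$ efficient thresholds. Here I would invoke the following general marginal inequality, valid for any edge set $\Edges$ and any non-edge $\{(i,j)\}$:
\[
v(\Edges\cup\{(i,j)\})-v(\Edges) \;\geq\; [U_i(\Edges\cup\{(i,j)\})-U_i(\Edges)] + [U_j(\Edges\cup\{(i,j)\})-U_j(\Edges)],
\]
which follows because adding an edge can only shrink shortest-path distances, so every non-endpoint node's payoff weakly increases. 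Applied inductively along the matching-addition process, whenever the pairwise condition is met at stage $k$ (both endpoints strictly benefit), the corresponding social marginal is strictly positive, so the efficient structure must contain at least $k$ interconnections as well. This yields $k_{\mathrm{eff}}\geq k_{\mathrm{pw}}$ throughout the redundant range and, combined with the outer-range comparisons, completes the proof.
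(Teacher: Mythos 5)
Your proposal is correct, and its decisive ingredient --- the superadditivity bound $v(\Edges\cup\{(i,j)\})-v(\Edges)\geq [U_i(\Edges\cup\{(i,j)\})-U_i(\Edges)]+[U_j(\Edges\cup\{(i,j)\})-U_j(\Edges)]$, valid because adding an edge can only shorten third parties' distances --- is exactly the engine of the paper's own proof. The difference is packaging. The paper gives a short contradiction argument covering every range of $F_{12}$ uniformly: if the efficient structure had fewer interconnections than the pairwise stable one, the missing interconnections could be added one by one; each addition benefits both endpoints (that is what pairwise stability of the larger structure encodes), so by the bound above the social welfare strictly increases, contradicting efficiency. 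You instead prove the pointwise monotonicity $k_{\mathrm{eff}}\geq k_{\mathrm{pw}}$ by cases: explicit threshold comparisons for the outer transitions (these orderings are already displayed verbatim between Theorem~\ref{thm:efficiency-two} and the equivalence corollary, so nothing new is verified there), plus the marginal inequality with induction inside the redundant band, which is where the real content lies and where you coincide with the paper. Your route makes the key inequality explicit and shows how to close the per-$k$ gap left by Theorem~\ref{thm:efficiency-two}, at the cost of redundant machinery; it also inherits the paper's one informality, namely that both arguments tacitly treat any candidate structure with $j<k_{\mathrm{pw}}$ interconnections as a matching of size $j$ when computing the endpoints' marginals. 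This is harmless --- in a non-matching arrangement more members of the opposite group sit at distance three, so the marginal gain of a fresh interconnection is only larger --- but it deserves a sentence in either proof.
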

\begin{proof}
If the efficient structure is not pairwise stable, it implies that it can be changed by Formation Dynamics. Suppose that it has less interconnections than pairwise stable structure. Then, adding interconnections can make it pairwise stable, which means that both players involved in the interconnection benefit from this interconnection. Then the social welfare increases which is conflict with the fact that the structure is efficient. Thus, the efficient structure has more interconnections than the pairwise stable structure.
\end{proof}

\begin{remark}\label{rm:PoA}
For the two intervals $\dfrac{c\delta}{y_{1}(s_{1})y_1(s_{2}) }   <F_{12}<\max\limits_{s \in\{s_{1}, s_{2}\}} \dfrac{c}{ y_1(s)}$ and $ \dfrac{2c}{ y_{2}(s_{2})+y_{2}(s_{1})+(s_{1}+s_{2}-4)\delta y_3}<F_{12}<  \max\limits_{s \in\{s_{1}, s_{2}\}} \dfrac{c}{  y_2(s)}$, the pairwise stable structure is not efficient. For $ \max\limits_{s \in\{s_{1}, s_{2}\}} \dfrac{c}{  y_2(s)}\geq F_{12}< \dfrac{c}{  y_3}$, the pairwise stable structure might be not efficient. This is because individuals are rational and selfish (i.e., they only care about their own payoffs.) Therefore, total social utility might drop when an individual tries to maximize its own payoff. Specifically, 
\begin{enumerate}
\item when $ \dfrac{c\delta}{y_{1}(s_{1})y_1(s_{2}) }  <F_{12}<\max\limits_{s \in\{s_{1}, s_{2}\}} \dfrac{c}{ y_1(s)}$,  an individual experiences loss by  having an interconnection, although that interconnection could bring profit for other individuals and, therefore, result in an increase of total utility. The individual refuses to {add} or maintain this interconnection.
\item when $ \dfrac{2c}{ y_{2}(s_{2})+y_{2}(s_{1})+(s_{1}+s_{2}-4)\delta y_3}<F_{12}< \dfrac{c}{  y_3}$,  as mentioned in the proof of Theorem \ref{thm:efficiency-two}, there exists at least one interconnection $\{(i,j)\}, i\in P_{1}, j\in P_{2}$. For the two individuals $\hat i\in P_{1}$ and $\hat j \in P_{2}$ ($\hat i\neq i, \hat j \neq j$), an interconnection causes loss for the player from the larger group, but brings profit for all other individuals, which results in removal of the interconnection or rejection in adding it and, therefore, in total utility loss. 
\end{enumerate}
\end{remark}

Fig.~\ref{fig:2x2} shows the number of interconnections and social welfare for both efficient and stable structures for two cliques of sizes 3 and 5, where the value of $\delta=0.5$ and $c=0.2$. Note that:
\begin{enumerate}
\item As shown in Fig.~\ref{fig:interC_2x2}, when $F_{12}=\dfrac{c}{y_3}$ which evaluates to 0.8 for the choice of our parameters, there exist non unique pairwise stable and efficient structures, and the number of interconnections for those structures vary between 3 and 15.
\item  As shown in Fig.~\ref{fig:interC_2x2} and Fig.~\ref{fig:value_2x2}, for $F_{12}\in (0.67, 0.2) \cup (0.23,0.53)$, the pairwise stable structure is not efficient and thus the social welfare of efficient structures is always greater than that of pairwise stable structures. Fig. \ref{fig:PoA} shows plot of PoA as a function of $F_{12}$. We observe that the slope for $F_{12}\in (0.67, 0.2) $ is the highest. This is because in this interval, the pairwise stable structure dose not have any interconnection, whereas the efficient structure has one interconnection. This link brings large value for the overall network and makes a large difference in social welfare of two kinds of structures. 
\end{enumerate}

 \begin{figure}
	\begin{center} 
		\subfloat[Number of Interconnections]{\includegraphics[width=0.99\linewidth]{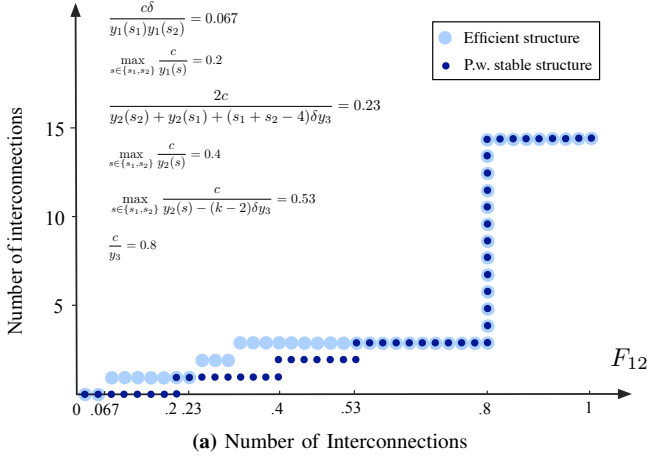}\label{fig:interC_2x2}}\\
		\subfloat[Social Welfare]{\includegraphics[width=0.95\linewidth]{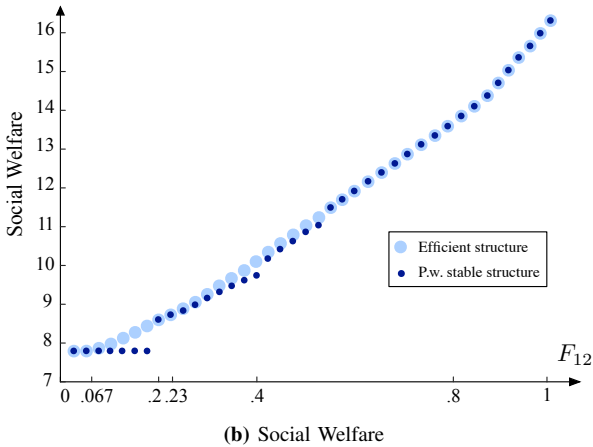}\label{fig:value_2x2}}
		\caption{\small Plots of no. of interconnections and social welfare as a function of $F_{12}$ for two groups of sizes 3 and 5, $\delta=0.5$, $c=0.2$.}\label{fig:2x2}
	\end{center}
\end{figure}
\begin{figure}
	{\includegraphics[width=0.9\linewidth]{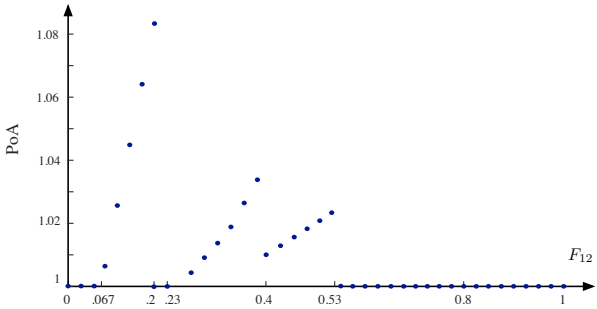}}
		\caption{\small Plot of price of anarchy as a function of $F_{12}$ for two groups of sizes 3 and 5, $\delta=0.5$, $c=0.2$.}\label{fig:PoA}
\end{figure}
We next discuss the evolution of intra vs. interconnections according to Formation Dynamics, when starting from an empty graph.

\begin{remark}[Formation of intra- vs. interconnections] Starting from an empty (or sparse) graph, initially the speed of formation of intra-connections is generally higher than that of interconnections. Fig.~\ref{fig:intra_inter} illustrates the fact that at the beginning of the formation dynamics, a certain amount of intra-connections is required to produce enough incentives for both groups to communicate through an interconnection.
\end{remark}
\begin{figure}[]
	\centering
	\includegraphics[width=0.9\linewidth]{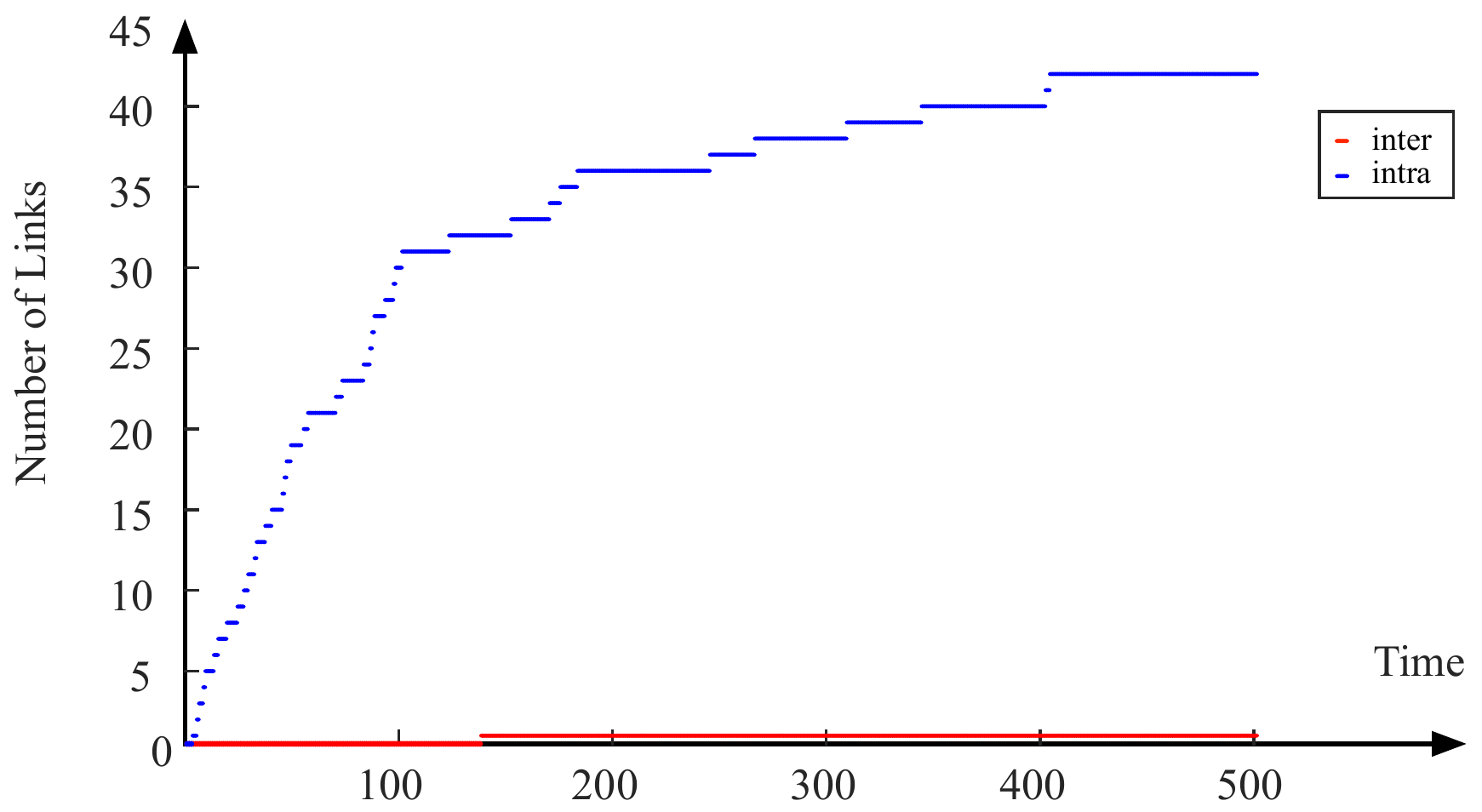}
        \caption{Time evolution of the number of inter- and intra-connections for two cliques of size 7.} \label{fig:intra_inter}
\end{figure} 

\section{Multigroup Connectivity Structure}

In what follows we analyze the more general case of having more than two groups. Consider $n$ individuals partitioned into groups $P_1,\dots,P_m$, and the payoff function defined above with 1-benefit $\delta<1$ and edge cost $c$, and functions $y_1$, $y_2$, and $y_3$. From Theorem \ref{thm: Pairwise Stability of Cliques} we know that each group forms a clique when $y_3>c$. We introduce the undirected graph $\mathcal T=(\mathcal V_{P}, \Edges_{\mathcal T} )$, whose nodes represent groups and $(\alpha,\beta)\in \Edges_{\mathcal T} $ if there exists at least one connection between $P_{\alpha}$ and $P_{\beta}$.

\begin{theorem}[Sufficiency Condition for Minimally Connected Cliques]\label{thm:multigroup_stability}
Consider $n$ individuals partitioned into groups $P_1,\dots,P_m$ of sizes $s_{1},\dots, s_{m}$ respectively. 
Assume that $c<y_3$ and $\mathcal T$ is connected. Then, there exists a pairwise stable structure consisting of minimally connected cliques along $\mathcal T$, if
 \begin{enumerate}
   \item \label{fact:nonedge-equal} for all $(\alpha, \beta)\in \Edges_{\mathcal T}$, $\alpha, \beta \in \until{m} $,  \\   
   \begin{equation}\label{utility-change-1}
   \begin{aligned}
  & \sum _{\lambda\neq \alpha, \lambda=1}^{m} F_{\alpha \lambda} 
   ( \delta^{d'_{\alpha\lambda}}-\delta^{d_{\alpha\lambda}})(1+(s_{\lambda}-1)\delta)>c,  \\
  & \sum _{\lambda\neq \beta, \lambda=1}^{m} F_{\beta \lambda} 
   ( \delta^{d'_{\beta\lambda}}-\delta^{d_{\beta\lambda}})(1+(s_{\lambda}-1)\delta)>c, \\
 & F_{\alpha \beta} <\max_{s\in \{s_{\alpha},s_{\beta}\}}\frac{c}{y_{2}(s)};  \text{ and}
   \end{aligned}
   \end{equation}
  \item \label{fact:tree-equal} for all $(\alpha, \beta) \notin \Edges_{\mathcal T}$, $\alpha, \beta \in \until{m} $,   \\
     \begin{equation}\label{utility-change-2}
     \begin{aligned}
      \sum _{\lambda\neq \alpha, \lambda=1}^{m} F_{\alpha \lambda} 
   ( \delta^{d'_{\alpha\lambda}}-\delta^{d_{\alpha\lambda}})(1+(s_{\lambda}-1)\delta)<c,\\
   ~ or  \sum _{\lambda\neq \beta, \lambda=1}^{m} F_{\beta \lambda} 
   ( \delta^{d'_{\beta\lambda}}-\delta^{d_{\beta\lambda}})(1+(s_{\lambda}-1)\delta)<c,
     \end{aligned}
   \end{equation}
\end{enumerate}    
where $d'_{\mu \lambda}=d_{\mu \lambda}(\Edges_{\mathcal T}\cup \{(\alpha, \beta)\})$ and $d_{\mu \lambda}=d_{\mu \lambda}(\Edges_{\mathcal T}\setminus \{(\alpha, \beta)\})$.
\end{theorem}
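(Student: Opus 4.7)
The plan is to construct an explicit candidate structure $\Edges^{\ast}$: take each $P_\gamma$ as a clique and, for every $(\alpha,\beta)\in\Edges_{\mathcal T}$, add a single interconnection edge $(i_{\alpha\beta},j_{\alpha\beta})$ with $i_{\alpha\beta}\in P_\alpha$ and $j_{\alpha\beta}\in P_\beta$, choosing these gateways so that each group's bridges to its neighbors in $\mathcal T$ are realized through a common node whenever possible. I would then verify Definition~\ref{def:pairwise_stable} in four case families: intra-group edges, the existing inter-group bridges, potential redundant interconnections between already-linked groups, and potential new interconnections between groups that are not linked in $\mathcal T$.

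The intra-group edges are immediately handled by Theorem~\ref{thm: Pairwise Stability of Cliques}, since $c<y_3$. The technical device that makes the remaining three checks tractable is the identity that, for a bridge endpoint residing in $P_\alpha$, the contribution to its payoff from the nodes of a foreign group $P_\lambda$ equals $F_{\alpha\lambda}\delta^{d_{\alpha\lambda}}(1+(s_\lambda-1)\delta)$: such an endpoint reaches one gateway of $P_\lambda$ after $d_{\alpha\lambda}$ quotient hops, and the remaining $s_\lambda-1$ members of $P_\lambda$ are one additional in-clique hop away. I would prove this as a preliminary lemma and then use it to reduce every utility-difference computation to a weighted sum over the quotient graph.

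With the identity in hand, the three remaining checks are direct. For an existing bridge $(i_{\alpha\beta},j_{\alpha\beta})$, the utility loss at $i_{\alpha\beta}$ from its removal telescopes to $\sum_{\lambda\neq\alpha}F_{\alpha\lambda}(\delta^{d'_{\alpha\lambda}}-\delta^{d_{\alpha\lambda}})(1+(s_\lambda-1)\delta)-c$, which the first inequality of~(\ref{utility-change-1}) requires to be positive; the symmetric statement at $j_{\alpha\beta}$ is the second inequality, so neither endpoint severs the bridge. For a candidate redundant interconnection between already-linked groups $(\alpha,\beta)\in\Edges_{\mathcal T}$, the quotient graph does not change upon addition, so the distances to every group other than $\alpha,\beta$ stay fixed; the bilateral utility change then reduces to exactly the two-group calculation of Theorem~\ref{thm:existence-convergence-two}, and the constraint $F_{\alpha\beta}<\max_{s\in\{s_\alpha,s_\beta\}}c/y_2(s)$ rules out a second bridge. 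For a candidate new bridge $(i',j')$ with $(\alpha,\beta)\notin\Edges_{\mathcal T}$, the utility gain at $i'$ equals the left-hand side of~(\ref{utility-change-2}) at $\alpha$ minus $c$; condition~(\ref{utility-change-2}) says that at least one of the two endpoints has gain strictly below $c$, so the edge is rejected under the link-addition rule of Formation Dynamics.

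The main obstacle I anticipate is the preliminary identity, because the clean form $F_{\alpha\lambda}\delta^{d_{\alpha\lambda}}(1+(s_\lambda-1)\delta)$ implicitly uses the quotient distance as a physical distance; in general, every intermediate group on the $\mathcal T$-shortest path from $\alpha$ to $\lambda$ whose in-gateway and out-gateway differ contributes an extra in-clique hop to the true physical distance. Choosing the candidate construction so that all of a group's bridges share a common gateway kills these extra hops and makes the identity exact. Verifying that such a gateway choice is always feasible, and then that Category C really decouples from the remainder of the network so that the two-group $y_2$ bound applies verbatim, are the two technical points requiring the most care.
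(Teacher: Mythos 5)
Your proposal matches the paper's proof in essentially every respect: the paper constructs exactly the candidate you describe (one interconnection per edge of $\mathcal T$, with each group routing all of its bridges through a single gateway agent $i_\alpha$), derives the same per-group contribution identity $F_{\alpha\lambda}(\delta^{d'_{\alpha\lambda}}-\delta^{d_{\alpha\lambda}})(1+(s_\lambda-1)\delta)$ for the utility differences, and then checks the same four edge families, invoking Theorem~\ref{thm: Pairwise Stability of Cliques} for intra-group edges, condition~\eqref{utility-change-1} to retain each bridge, the two-group $y_2$ bound to block redundant interconnections, condition~\eqref{utility-change-2} to block new bridges, and Lemma~\ref{lem:stable-by-dynamics} to conclude stability. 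The technical caveats you flag (feasibility of the common-gateway choice, and the decoupling needed for the $y_2$ bound) are real but benign, and the paper handles them no more explicitly than you do.
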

\begin{proof}
Suppose the network $\Edges_{0}$ (consisting of disjoint cliques) be connected along $\mathcal T$ and satisfy
\begin{enumerate}
\item if $(\alpha, \beta)\in \Edges_{\mathcal T}$, there is only one inter-link between $P_{\alpha}$ and $P_{\beta}$;
\item for every group $P_{\alpha}$, only one agent $i_{\alpha}$ has inter-links. 
\end{enumerate}
 For any pair of $(\alpha, \beta)$,  let $\Edges''=\Edges_{0}\cup\{(i_{\alpha},i_{\beta})\}$ and $\Edges'=\Edges_{0}\setminus\{(i_{\alpha},i_{\beta})\}$. it is easy to find that 
 \begin{align*}
U_i(\Edges'')-U_i(\Edges')=\sum_{k=1,k\neq \alpha}^{m} F_{k \alpha}\sum_{l\in P_{k}}(\delta^{d_{il}(\Edges'')}-\delta^{d_{il}(\Edges')}) -c.
\end{align*}
 Therefore, we have
 \[
 U_{i_{\alpha}}(\Edges'')- U_{i_{\alpha}}(\Edges')=
  \sum _{\lambda\neq \alpha, \lambda=1}^{m} F_{\alpha \lambda} 
   ( \delta^{d'_{\alpha\lambda}}-\delta^{d_{\alpha\lambda}})(1+(s_{\lambda}-1)\delta) -c
\]
and 
\[
  U_{i_{\beta}}(\Edges'')- U_{i_{\alpha}}(\Edges')=  \sum _{\lambda\neq \beta, \lambda=1}^{m} F_{\beta \lambda} 
   ( \delta^{d'_{\beta\lambda}}-\delta^{d_{\beta\lambda}})(1+(s_{\lambda}-1)\delta)-c.
 \]
Therefore, for $(\alpha, \beta)\in \Edges_{\mathcal T}$,  it follows from (\ref{utility-change-1}) that $\Edges''$ defeats $\Edges'$.  
 Similar to the proof of Theorem \ref{thm:existence-convergence-two}, since $F_{\alpha \beta} <\max_{s\in \{s_{\alpha},s_{\beta}\}}\frac{c}{y_{2}(s)}$, there only exists one inter-link between $P_{\alpha}$ and $P_{\beta}$.
 For $(\alpha, \beta)\notin \Edges_{\mathcal T}$, (\ref{utility-change-2}) implies that there exist no inter-link between $P_{\alpha}$ and $P_{\beta}$.
 
 By Theorem \ref{lem:stable-by-dynamics}, since network $\Edges_{0}$ can not be changed under dynamics, we can conclude that network $\Edges_{0}$ is stable.
\end{proof}

\begin{remark}
Theorem~\ref{thm:multigroup_stability} answers the question: given a certain matrix $F$ and graph structure $\Edges$ is $\Edges$ pairwise stable or not?
\end{remark}

\begin{corollary}
For the special case of interconnection structure being a star, with $P_{\gamma}$ as the central group, the sufficient condition of Theorem~\ref{thm:multigroup_stability} can be simplified as follows:\begin{enumerate}
   \item for all $\alpha \in \until{m} $, $(\alpha \neq \gamma)$ \\   
   \begin{equation*}
   \begin{aligned}
 & F_{\alpha \gamma} >\max_{s\in \{s_{\alpha},s_{\gamma}\}}\frac{c}{y_{1}(s)}; \text{ and}
   \end{aligned}
   \end{equation*}
  \item for all $(\alpha, \beta)  \in \until{m} $, $(\alpha, \beta \neq \gamma)$,   \\
  \begin{equation*}
   \begin{aligned}
 & F_{\alpha \beta} <\max_{s\in \{s_{\alpha},s_{\beta}\}}\frac{c}{y_{2}(s)}.
   \end{aligned}
   \end{equation*}
\end{enumerate}    
\end{corollary}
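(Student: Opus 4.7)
The plan is to specialize the general distance-based hypotheses in Theorem~\ref{thm:multigroup_stability} to the case in which $\mathcal{T}$ is a star centered at $P_\gamma$, and to verify that the simpler bounds stated in the corollary are strong enough to imply those hypotheses. The two parts of the corollary correspond, respectively, to conditions~(\ref{utility-change-1}) and~(\ref{utility-change-2}) of the theorem evaluated along the star topology.

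First I will handle the star edges, i.e.\ pairs $(\alpha,\gamma)$ with $\alpha\neq\gamma$. In $\mathcal{T}\cup\{(\alpha,\gamma)\}$ one has $d'_{\alpha\gamma}=1$, $d'_{\gamma\lambda}=1$ for every leaf $\lambda\neq\alpha,\gamma$, and $d'_{\alpha\lambda}=2$ for $\lambda\neq\alpha,\gamma$. In $\mathcal{T}\setminus\{(\alpha,\gamma)\}$ the leaf $P_\alpha$ is disconnected from the rest of the network, so $d_{\alpha\lambda}=\infty$ for every $\lambda\neq\alpha$, while the distances from $P_\gamma$ to the remaining leaves are unchanged and equal to $1$. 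Using $\delta^{\infty}=0$, the utility-change sums in~(\ref{utility-change-1}) collapse: on the $P_\gamma$ side only the $\lambda=\alpha$ term survives, giving $F_{\alpha\gamma}\,\delta(1+(s_\alpha-1)\delta)=F_{\alpha\gamma}\,y_1(s_\alpha)$; on the $P_\alpha$ side the sum becomes $F_{\alpha\gamma}\,y_1(s_\gamma)+\sum_{\lambda\neq\alpha,\gamma}F_{\alpha\lambda}\,\delta\,y_1(s_\lambda)$. Thus $F_{\alpha\gamma}>c/y_1(s_\alpha)$ is necessary and sufficient on the $\gamma$ side, and because the extra terms on the $\alpha$ side are non-negative, $F_{\alpha\gamma}>c/y_1(s_\gamma)$ is sufficient on the $\alpha$ side. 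Taking the maximum of these two lower bounds reproduces exactly the condition $F_{\alpha\gamma}>\max_{s\in\{s_\alpha,s_\gamma\}} c/y_1(s)$ of~(i).

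Next I will turn to non-edges, i.e.\ pairs $(\alpha,\beta)$ with $\alpha,\beta\neq\gamma$. Since every pair of leaves already communicates through the center at distance $2$, inserting the edge $(\alpha,\beta)$ changes only the single distance $d_{\alpha\beta}$ from $2$ to $1$ and leaves every other group-to-group distance untouched. Substituting into~(\ref{utility-change-2}) yields utility changes $F_{\alpha\beta}(\delta-\delta^{2})(1+(s_\beta-1)\delta)=F_{\alpha\beta}\,y_2(s_\beta)$ on the $P_\alpha$ side and $F_{\alpha\beta}\,y_2(s_\alpha)$ on the $P_\beta$ side. The ``or'' clause in~(\ref{utility-change-2}) demands that at least one of these drop below $c$, which is exactly $F_{\alpha\beta}<\max_{s\in\{s_\alpha,s_\beta\}} c/y_2(s)$, i.e.\ condition~(ii).

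The only delicate step will be the removal of a star edge, which disconnects a leaf and sends several distances to infinity simultaneously; one must set $\delta^\infty=0$ consistently and observe that discarding the non-negative residual terms on the $P_\alpha$ side is precisely what converts the theorem's exact inequality into the simpler sufficient bound announced in~(i). The non-edge case is by contrast a one-line substitution with no distance discontinuities, so the real substance of the proof lies in the star-edge bookkeeping.
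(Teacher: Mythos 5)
Your distance bookkeeping is correct and is the natural way to obtain the corollary (the paper states it without proof, so specializing Theorem~\ref{thm:multigroup_stability} to the star is indeed all there is to do). In particular, the computation that removing a star edge isolates the leaf so that only the $\lambda=\alpha$ term survives on the center's side, giving exactly $F_{\alpha\gamma}\,y_1(s_\alpha)>c$, and that discarding the non-negative residual terms on the leaf's side makes $F_{\alpha\gamma}>c/y_1(s_\gamma)$ sufficient there, is right; so is the observation that for a pair of leaves only $d_{\alpha\beta}$ changes from $2$ to $1$, which turns~\eqref{utility-change-2} into $F_{\alpha\beta}<\max_{s\in\{s_\alpha,s_\beta\}}c/y_2(s)$.

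There is, however, a genuine gap: condition~\eqref{utility-change-1} of Theorem~\ref{thm:multigroup_stability} contains a \emph{third} clause for tree edges, namely $F_{\alpha\beta}<\max_{s\in\{s_\alpha,s_\beta\}}c/y_2(s)$, which for the star means $F_{\alpha\gamma}<\max_{s\in\{s_\alpha,s_\gamma\}}c/y_2(s)$ for every leaf $\alpha$. This is what rules out a second (redundant) interconnection between the center and a leaf, and without it the minimally connected star need not be pairwise stable. Your proof verifies only the two sum inequalities of~\eqref{utility-change-1} and never mentions this clause, and it is not implied by the lower bound in~(i): for example, with $c=0.2$, $\delta=0.5$, $s_\alpha=s_\gamma=3$ one has $c/y_1=0.2$ and $c/y_2=0.4$, so $F_{\alpha\gamma}=0.9$ satisfies~(i) yet violates the theorem's upper bound, and by Theorem~\ref{statements on redundant and comembers} redundant links would then form. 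The corollary as printed omits this hypothesis as well (its condition~(ii) is restricted to $\alpha,\beta\neq\gamma$), but a complete proof must either verify the clause or state explicitly that the bound $F_{\alpha\gamma}<\max_{s\in\{s_\alpha,s_\gamma\}}c/y_2(s)$ has to be added to~(i) for the specialization of the theorem to go through.
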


In the following example, we illustrate that due to randomness in choosing the pair of players, Formation Dynamic does not always converge to a unique stable structure even for the same initial network structure and matrix $F$.

\begin{example}\label{ex:multigroup-convergence}
Consider the case where we have $\dfrac{c}{y_{1}(s)}<F_{\alpha \beta}<\dfrac{c}{y_2(s)}$ for all $\alpha, \beta$, and that $\delta<\dfrac{\sqrt{5}-1}{2}$. We have five equal size groups named $\{1, 2, \dots, 5\}$. Two different processes are shown in Fig.~\ref{fig:dynamics_ex}. 
\begin{enumerate}[Process A.]
\item  For the process shown in Fig.~\ref{fig:dynamics_ex2} the order of pair selection is as follows: $(1, 2) \rightarrow (1, 3) \rightarrow (1, 4) \rightarrow (1, 4) \rightarrow (1,5)$. Note that by $(1,2)$ we mean an individual selected from group 1 paired with an individual from group 2, which results in the star graph being the convergent pairwise stable structure.
\item Now, consider the process shown in Fig.~\ref{fig:dynamics_ex1} for which the order of pairs of groups selected is as follows:  $(2, 3) \rightarrow (1, 2) \rightarrow (1, 3) \rightarrow (1,4)\rightarrow (3, 5) \rightarrow (2,4) \rightarrow (4,5)$. At the very last step, we have:
\[
\begin{aligned}
&U_i\big( \Edges \cup \{4,5\} -U_i(\Edges \setminus \{ 4,5\}) \big)\\
=& F_{45} \big(y_1(s) \big) - F_{45} \big( \delta^3 y_1(s) \big) \\
&\quad + F_{35} \big( \delta y_1(s) \big) - F_{35} \big( \delta^2 y_1(s)  \big)-c .
\end{aligned}\]
Since $F_{35}, F_{45}<\dfrac{c}{y_1(s)}$ and $\delta<\dfrac{\sqrt{5}-1}{2}$, we conclude that  
\[F_{45} (1-\delta^3) + F_{35} (\delta -\delta^2) > \dfrac{c}{y_1(s)}.
\]
 Consequently, we obtain 
 \[U_i\big( \Edges \cup \{4,5\} -U_i(\Edges \setminus \{ 4,5\}) \big)>0\]
  which means that the connection (4,5) is formed. Now since we have $F_{\alpha \beta}< \dfrac{c}{y_2(s)}$, no connected triad and thereby, no additional links will be formed. Also no link will be removed. Therefore, the final structure in Fig.~\ref{fig:dynamics_ex1}, which is a ring, is stable.

\end{enumerate}
\end{example}
 \begin{figure}[h]
	\begin{center} 
		\subfloat[Process A]{\includegraphics[width=0.9\linewidth]{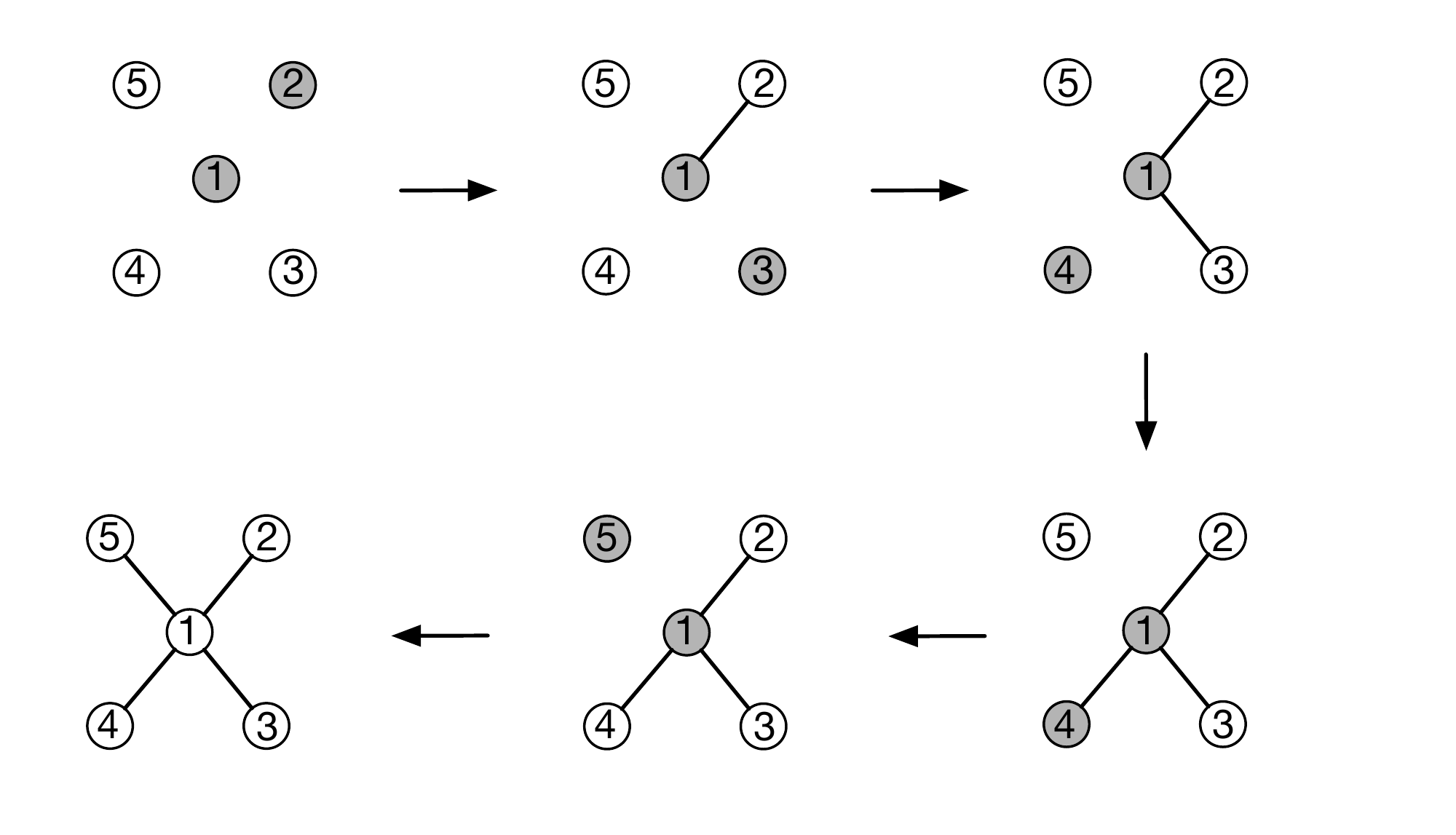}\label{fig:dynamics_ex2}}\quad
		\subfloat[Process B]{\includegraphics[width=0.9\linewidth]{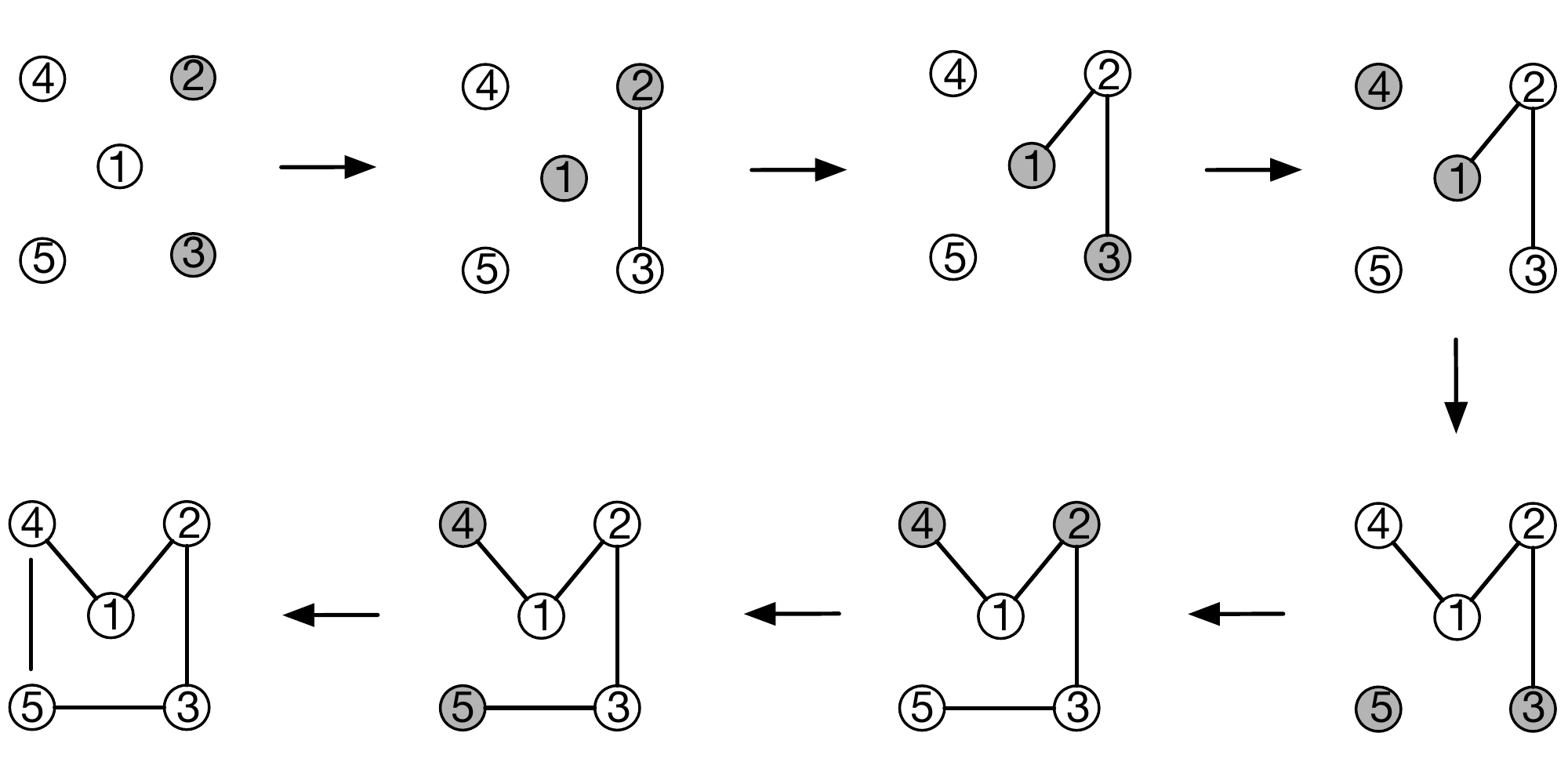}\label{fig:dynamics_ex1}}
		\caption{The processes of Example \ref{ex:multigroup-convergence}. At each step, shaded nodes represent the groups which the selected individuals belong to, and the outcome of the game (action taken regarding link addition, link removal, or indifference) is represented in the next.}\label{fig:dynamics_ex}
	\end{center}
\end{figure}

Example \ref{ex:multigroup-convergence} shows that, based on the order of the sequence of selected pairs, we can have two or possibly more convergent stable structures, and therefore, the convergence results cannot be generalized and the convergent structure is not always unique.

From Theorem \ref{thm: Pairwise Stability of Cliques} we know that each group forms a clique. We now analyze the interconnections among those cliques. Theorem~\ref{statements on redundant and comembers} addresses the redundancy of interconnections.

\begin{theorem}[Formation of Redundancies]\label{statements on redundant and comembers}
  Consider $n$ individuals partitioned into groups $P_1,\dots,P_m$ of sizes $s_1, s_2, \dots s_m$.  Suppose that  $c<y_3$. Then, under Formation Dynamics, 
  \begin{enumerate}
   \item\label{fact:redundant} redundant interconnections between $P_{\alpha}$ and $P_{\beta}$ will be formed and never removed,  if 
  $  F_{\alpha \beta} > \max\limits_{s \in\{s_{\alpha}, s_{\beta}\}} \dfrac{c}{ y_2(s)}$, and 
   \item\label{fact:comember} maximal interconnections between $P_{\alpha}$ and $P_{\beta}$ will be formed and never removed, if $\dfrac{c}{ y_3 }<F_{\alpha \beta} \leq 1$.      
  \end{enumerate}  
\end{theorem}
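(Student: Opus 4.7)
Our plan is to reduce to the post-clique regime and then analyze intergroup edges by direct marginal-utility calculations. Since $c<y_{3}$, Theorem~\ref{thm: Pairwise Stability of Cliques} guarantees that along Formation Dynamics every $P_{\gamma}$ becomes a clique, so we may restrict attention to states in which both $P_{\alpha}$ and $P_{\beta}$ are cliques. For each of the two claims we handle (a) the incentive for a missing interconnection $\{(\hat i,\hat j)\}$ with $\hat i\in P_{\alpha}$, $\hat j\in P_{\beta}$ to be added when the pair is drawn, and (b) the disincentive for either endpoint of an existing interconnection to sever it.

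For statement~\ref{fact:redundant}, since $y_{2}(s)<y_{1}(s)$ the hypothesis implies $F_{\alpha\beta}>\max_{s\in\{s_{\alpha},s_{\beta}\}} c/y_{1}(s)$, and the argument from the necessity direction of Theorem~\ref{thm:notequal-Disjoint} applied to this pair shows that at least one interconnection $(i_{1},j_{1})$ between $P_{\alpha}$ and $P_{\beta}$ must form along the dynamics. I then examine a subsequent selection of a pair $(\hat i,\hat j)$ with $\hat i\in P_{\alpha}\setminus\{i_{1}\}$ and $\hat j\in P_{\beta}\setminus\{j_{1}\}$. Along a realization in which no other intergroup edges adjacent to $P_{\alpha}\cup P_{\beta}$ have yet formed, the relevant distances collapse to the two-group setting: $d(\hat i,\hat j)=3$, $d(\hat i,k)=3$ for $k\in P_{\beta}\setminus\{j_{1},\hat j\}$, and $d(\hat i,j_{1})=2$. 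Transposing the computation from the proof of Theorem~\ref{thm:existence-convergence-two} verbatim, the marginal gains are $F_{\alpha\beta}y_{2}(s_{\beta})-c$ for $\hat i$ and $F_{\alpha\beta}y_{2}(s_{\alpha})-c$ for $\hat j$, both strictly positive by hypothesis, so the edge is added. For the ``never removed'' part, once two such fresh interconnections $(i_{1},j_{1})$ and $(i_{2},j_{2})$ coexist, severing $(i_{1},j_{1})$ raises $d(i_{1},j_{1})$ from $1$ to $3$ (through $i_{2}$ and $j_{2}$) and $d(i_{1},k)$ from $2$ to $3$ for each $k\in P_{\beta}\setminus\{j_{1},j_{2}\}$; the net loss to $i_{1}$ is exactly $F_{\alpha\beta}y_{2}(s_{\beta})$, which exceeds $c$ by hypothesis, so $i_{1}$ strictly prefers to keep the edge. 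The symmetric inequality $F_{\alpha\beta}y_{2}(s_{\alpha})>c$ handles $j_{1}$, and the same argument applies to $(i_{2},j_{2})$.

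For statement~\ref{fact:comember}, the stronger hypothesis $F_{\alpha\beta}>c/y_{3}$ allows a uniform argument that avoids tracking alternative paths. For any unconnected pair $(\hat i,\hat j)\in P_{\alpha}\times P_{\beta}$, since these endpoints are not yet adjacent one has $d(\hat i,\hat j)\geq 2$; therefore the $\hat j$-term alone contributes at least $F_{\alpha\beta}(\delta-\delta^{2})-c=F_{\alpha\beta}y_{3}-c>0$ to $\hat i$'s marginal gain, while contributions from all other endpoints (in $P_{\beta}$ and in other groups) are nonnegative. The same bound applies to $\hat j$, so every missing interconnection is added whenever its endpoints are selected. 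Symmetrically, severing any existing interconnection $(i,j)$ pushes $d(i,j)$ from $1$ to at least $2$ and hence costs each endpoint at least $F_{\alpha\beta}y_{3}>c$, with all other distance changes nonnegative. The dynamics therefore populates all $s_{\alpha}s_{\beta}$ interconnections between $P_{\alpha}$ and $P_{\beta}$ and retains them, yielding the maximally connected structure.

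The main obstacle sits inside statement~\ref{fact:redundant}: in a general multigroup state, interconnections already present between $P_{\alpha}$ (or $P_{\beta}$) and some third group $P_{\gamma}$ can shortcut paths from $\hat i$ to $P_{\beta}$ and shrink the marginal gain below $F_{\alpha\beta}y_{2}(s_{\beta})-c$, since only $F_{\alpha\beta}y_{3}-c$ is guaranteed unconditionally. I plan to sidestep this by leveraging the randomness of Formation Dynamics: starting from the invariant set $S$, with positive probability the pairs realizing $(i_{1},j_{1})$ and $(\hat i,\hat j)$ are drawn and the two redundant interconnections between $P_{\alpha}$ and $P_{\beta}$ are installed before any edges involving third groups appear, so the two-group distance calculation applies verbatim along that realization. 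Once both interconnections coexist, the removal inequalities above depend only on the direct path between $P_{\alpha}$ and $P_{\beta}$ and therefore continue to hold regardless of what subsequently happens elsewhere in the network.
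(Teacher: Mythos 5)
The paper states Theorem~\ref{statements on redundant and comembers} without any proof, so there is nothing to compare line by line; the intended argument is evidently the transposition of the two-group computations of Theorem~\ref{thm:existence-convergence-two}\ref{fact:redundant-2groups}--\ref{fact:comember-2groups}, which is essentially what you attempt. Your treatment of statement~\ref{fact:comember} is correct and in fact more robust than a naive transposition: bounding $\hat i$'s marginal gain from below by the single term $F_{\alpha\beta}(\delta-\delta^{2})-c=F_{\alpha\beta}y_{3}-c$ and noting that every other distance change has the favorable sign gives an argument that is insensitive to whatever else is present in the multigroup network, which is exactly the right way to handle that part.

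The gap is in the ``never removed'' half of statement~\ref{fact:redundant}. Your closing claim --- that once two disjoint interconnections coexist, the removal inequalities ``depend only on the direct path between $P_{\alpha}$ and $P_{\beta}$'' and so persist regardless of later events --- is false. The loss $i_{1}$ incurs from severing $(i_{1},j_{1})$ is $\sum_{k}\hat F_{i_{1}k}\bigl(\delta^{d_{i_{1}k}(\Edges)}-\delta^{d_{i_{1}k}(\Edges\setminus\{(i_{1},j_{1})\})}\bigr)$, and the post-removal distances are set by the shortest \emph{alternative} paths, which need not run through $(i_{2},j_{2})$. Concretely, if $F_{\alpha\gamma}$ and $F_{\beta\gamma}$ both exceed $c/y_{3}$ for some third group $P_{\gamma}$, your own statement~\ref{fact:comember} forces maximal interconnections to $P_{\gamma}$ on both sides, after which every node of $P_{\alpha}$ is at distance $2$ from every node of $P_{\beta}$ without any direct $\alpha$--$\beta$ edge; the loss to $i_{1}$ from severing $(i_{1},j_{1})$ then collapses to $F_{\alpha\beta}y_{3}$, which under the hypothesis $F_{\alpha\beta}>\max_{s}c/y_{2}(s)$ need not exceed $c$ because $y_{3}<y_{2}(s)$, so $i_{1}$ strictly prefers to sever. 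The same shortcut problem you flag for the addition step therefore afflicts removal, and the positive-probability device cannot repair it: it exhibits one realization in which the redundancy appears early, whereas ``will be formed and never removed'' is a claim about the entire (random) subsequent evolution, during which third-group edges keep arriving. To close this you would need either an extra hypothesis ruling out such shortcuts (e.g., confining the claim to the two-group subnetwork, as in Theorem~\ref{thm:existence-convergence-two}) or a weaker conclusion; the statement as written does not supply it.
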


Fig.~\ref{fig:bounds-samesize} illustrates the four scenarios for $F_{\alpha \beta}$'s. The horizontal axis corresponds to the values of $F_{\alpha \beta}$ where $\{\alpha \beta\}$ belongs to the edge-set of the spanning tree, and the vertical axis corresponds to all other $F_{\alpha \beta}$'s. 

\begin{figure}[ht]
	\centering
	\includegraphics[width=0.99\linewidth]{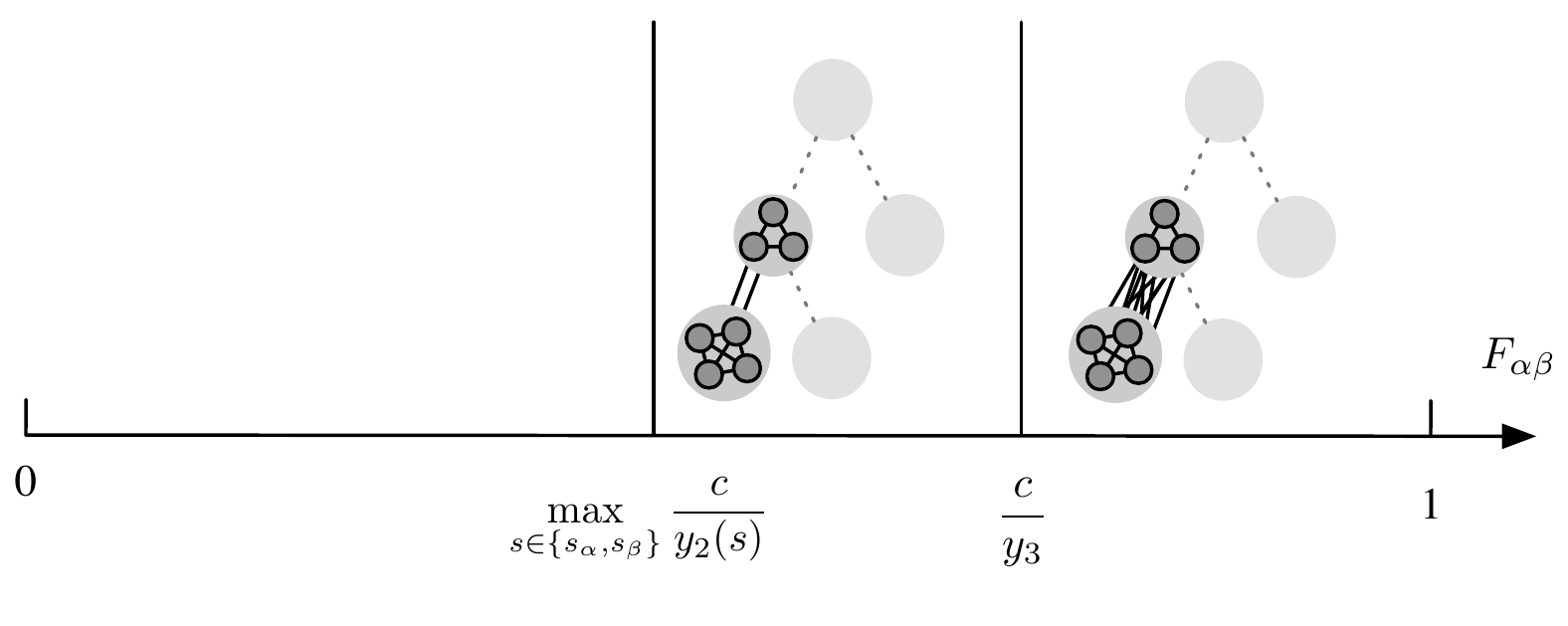}
        \caption{An illustration of ranges of parameter space in Theorem~\ref{statements on redundant and comembers} 
        } \label{fig:bounds-samesize}  
       \end{figure}

\begin{figure}[ht]
	\begin{center} 
		\subfloat[Number of Interconnections]{\includegraphics[width=0.8\linewidth]{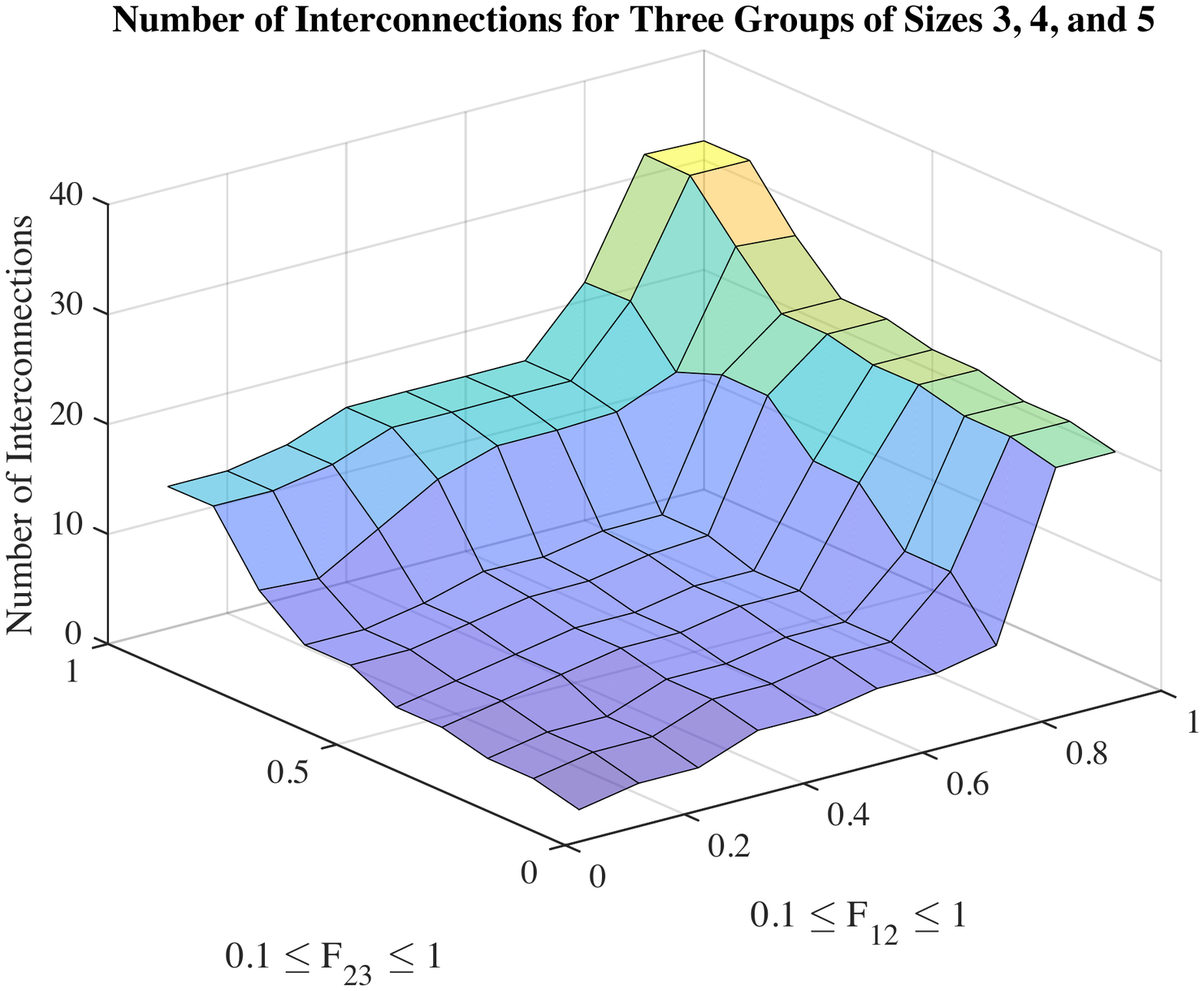}\label{}}\\
		\subfloat[Social Welfare]{\includegraphics[width=0.8\linewidth]{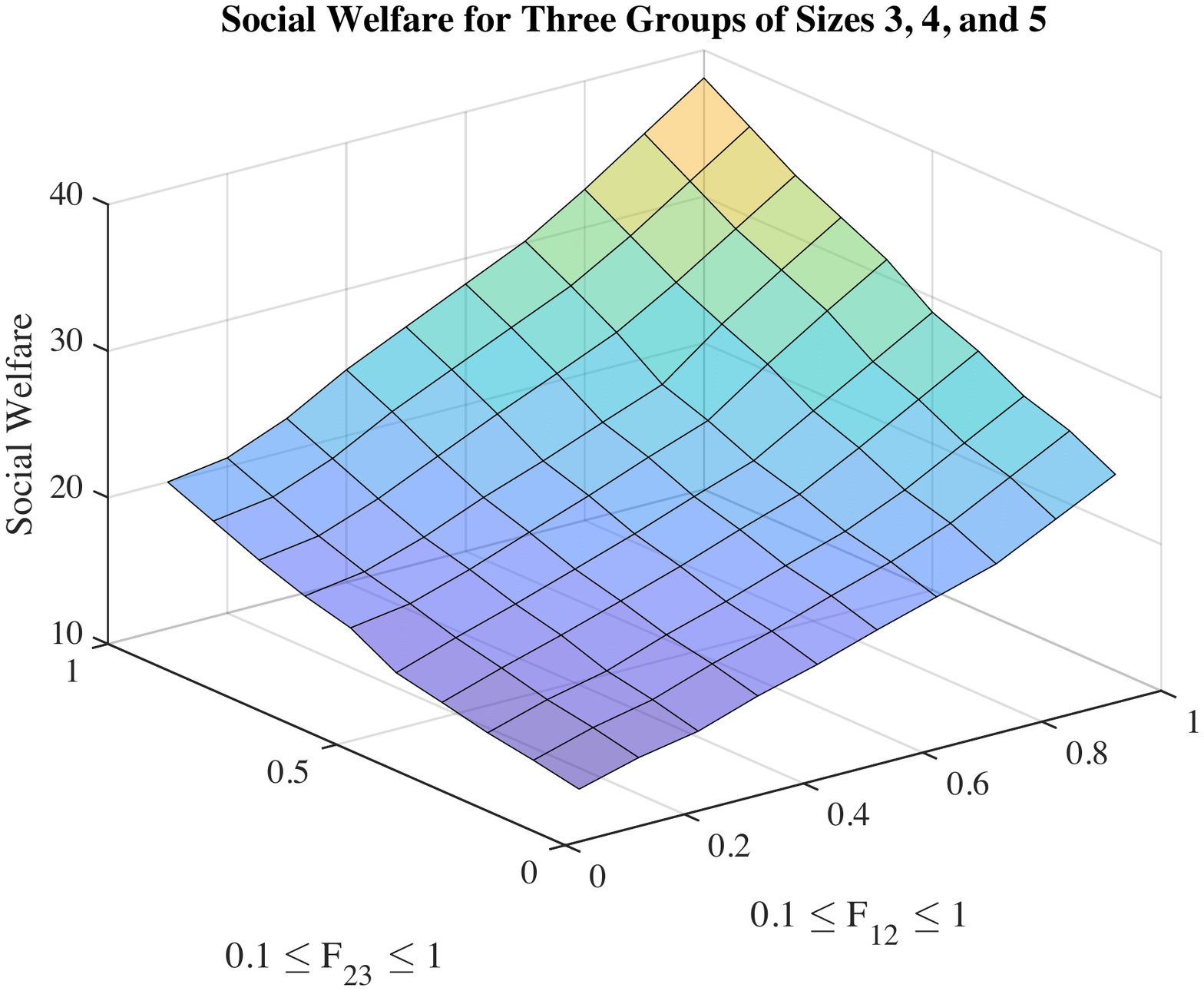}\label{}}
		\caption{\small Three groups of sizes 3, 4, and 5, $\delta=0.5$, $c=0.2$}\label{fig:3x3_general}
	\end{center}
\end{figure}

\begin{theorem}\label{thm:efficiency1} (Efficiency) For $n$ individuals partitioned into groups according to $P$, the efficient structure requires the same node to be chosen from each clique to provide bridges to other cliques, i.e., for a fixed interconnection structure and number of nodes, choosing the same representative from each clique increases the social welfare. 
\end{theorem}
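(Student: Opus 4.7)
The plan is to prove the theorem by a local swap argument. Since the standing assumption $c<y_{3}$ makes each group a clique (Theorem~\ref{thm: Pairwise Stability of Cliques}), I would fix the inter-clique graph $\mathcal{T}$ (the ``interconnection structure'') and optimise only over the assignment of bridges to nodes within each clique. Suppose clique $P_{\beta}$ carries two bridges attached at distinct representatives $j_{1},j_{2}\in P_{\beta}$, where the bridge at $j_{2}$ connects to node $k$ of another clique $P_{\gamma}$. Define $\Edges' = (\Edges\setminus\{(j_{2},k)\})\cup\{(j_{1},k)\}$. Since $|\Edges'|=|\Edges|$ the link-cost terms cancel, so
\[
 v(\Edges')-v(\Edges) \;=\; \sum_{i\neq l}\hat{F}_{il}\bigl(\delta^{d_{il}(\Edges')}-\delta^{d_{il}(\Edges)}\bigr),
\]
and it suffices to show this sum is nonnegative, and strictly positive generically.

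The heart of the argument is a pairwise distance audit, which I would carry out first in the case when $\mathcal{T}$ is a tree. Let $T_{2}$ be the component of $\mathcal{T}\setminus\{(j_{2},k)\}$ containing $P_{\gamma}$, and let $A_{1}$ be the set of cliques in the opposite component that are reachable from $P_{\beta}$ only through bridges rooted at $j_{1}$. Intra-clique distances are unchanged. For every $y\in P_{\beta}\setminus\{j_{1},j_{2}\}$ and every $z\in T_{2}$, the distance is unchanged (both networks offer a two-step $y\to\text{rep}\to k$ path since $P_{\beta}$ is a clique). The pairs $(j_{1},z)$ and $(j_{2},z)$ with $z\in T_{2}$ swap their distances between $\Edges$ and $\Edges'$; since $j_{1},j_{2}\in P_{\beta}$ share the same row of $\hat{F}$, their two contributions exactly cancel. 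Pairs from ``other branches'' of $P_{\beta}$ into $T_{2}$ still cross $P_{\beta}$ in two steps either way. The only surviving changes come from pairs $(x,z)\in A_{1}\times T_{2}$: every shortest $\Edges$-path uses the subpath $j_{1}\to j_{2}\to k$, while every shortest $\Edges'$-path uses $j_{1}\to k$, so the distance decreases by exactly one.

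Combining, each summand of $v(\Edges')-v(\Edges)$ is nonnegative, and the net increment equals
\[
 2\sum_{x\in A_{1}}\sum_{z\in T_{2}}\hat{F}_{xz}\,\delta^{d_{xz}(\Edges)-1}(1-\delta) \;>\; 0,
\]
strictly positive because $A_{1}$ is nonempty (by hypothesis $j_{1}$ already anchored a bridge into $A_{1}$) and the corresponding $F$-entries are positive (those bridges exist). Iterating this swap over every clique with multiple distinct bridge endpoints produces a configuration with a single representative per clique and strictly greater social welfare, which is the stated conclusion. The main obstacle I anticipate lies in extending the audit to the case when $\mathcal{T}$ contains cycles: alternative shortest paths may bypass the swapped bridge in $\Edges$, so the ``decrease by exactly one'' may weaken to ``decrease by at most one''. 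Because the swap only displaces an endpoint inside a single clique, every $\Edges$-path has an $\Edges'$-counterpart of the same length, so the increment remains nonnegative, and strict positivity still holds whenever some shortest $\Edges$-path genuinely used the swapped bridge.
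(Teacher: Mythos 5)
Your distance audit is correct and genuinely more careful than the paper's own proof (which just asserts in one line that concentrating the bridges ``shortens distances''), but the single-bridge swap on which you build the induction has a real gap: the claim that ``the only surviving changes come from pairs $(x,z)\in A_{1}\times T_{2}$'' fails whenever the donor node $j_{2}$ anchors \emph{other} bridges besides the one being moved. For a clique $P_{\mu}$ attached to $P_{\beta}$ through one of those retained bridges at $j_{2}$, a shortest path from $x\in P_{\mu}$ to $z\in T_{2}$ in $\Edges$ enters and leaves $P_{\beta}$ at $j_{2}$ with no internal step, whereas in $\Edges'$ it must detour $j_{2}\to j_{1}$, so $d_{xz}$ \emph{increases} by one. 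These negative terms are not compensated by your $A_{1}\times T_{2}$ gain in general: if the branches still rooted at $j_{2}$ are heavy (large $\hat{F}$-weight) and those rooted at $j_{1}$ are light, a single swap strictly decreases $v$. Consequently your iteration is not monotone, and no ordering of single-bridge moves avoids the problem once two distinct nodes of the same clique each anchor at least two bridges (the first bridge moved away from either of them necessarily leaves one behind). Your closing remark about cyclic $\mathcal{T}$ has the same flaw: the rerouted path $\dots\to j_{2}\to j_{1}\to k\to\dots$ is one edge \emph{longer}, not of the same length.

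The repair is to merge in batches rather than one bridge at a time: move \emph{all} bridges incident to $j_{2}$ onto $j_{1}$ simultaneously. Then the audit closes cleanly --- pairs between $j_{1}$-branches and $j_{2}$-branches drop by one, pairs between two former $j_{2}$-branches and pairs involving third anchors or non-representative nodes of $P_{\beta}$ are unchanged, and the increase for $(j_{2},z)$ exactly cancels the decrease for $(j_{1},z)$ because $\hat{F}_{j_{1}z}=\hat{F}_{j_{2}z}$ for $z\notin P_{\beta}$, exactly as in your two-bridge case. Each batch merge reduces the number of representatives of $P_{\beta}$ by one and never decreases $v$, so the induction terminates at a single representative per clique. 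One smaller point: strict improvement requires not just that both merged nodes anchor nonempty branches but that some $\hat{F}_{xz}>0$ for a pair $(x,z)$ straddling them; the existence of the bridges themselves does not guarantee this, so without that assumption the conclusion should be weakened to ``does not decrease the social welfare.''
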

\begin{proof}
Suppose that the density and the structure of interconnections are fixed. It is easy to see that by choosing the same representative from each clique, the distance between individuals from different cliques would be shortened, resulting in the term $\delta^d_{ij}$ in equations~\eqref{payoff-function} being larger. Therefore, the social welfare will increase. 
\end{proof}

However, this structure is unlikely to be pairwise stable since the
representative would bear a high cost for maintaining these
interconnections. Figure~\ref{fig:eff-thm} illustrates two networks with
the same interconnection structureswhere one network has higher social
welfare due to each group having only one representative.

\begin{figure}[ht]
	\begin{center} 
		\includegraphics[width=0.95\linewidth]{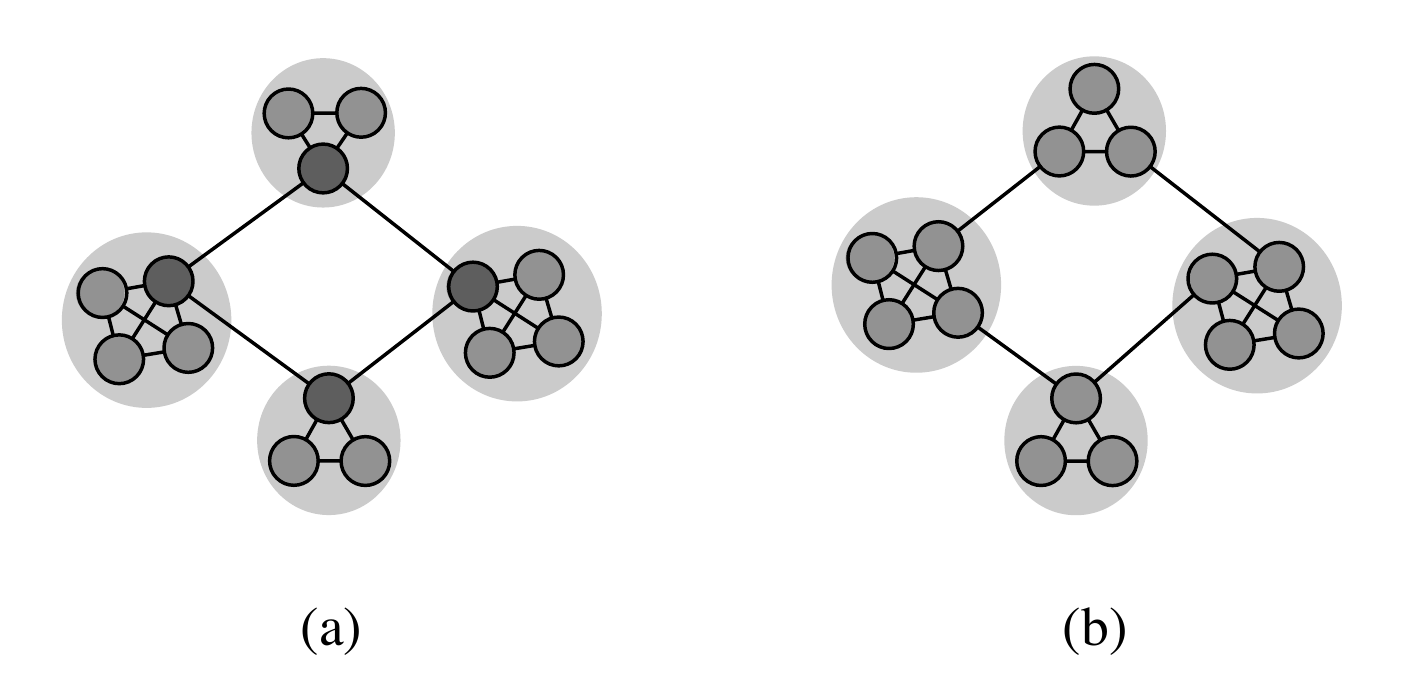}\label{fig:efficient-line}		\caption{\small In network a) each cliques has only one representative, whereas in figure b) some cliques have multiple representatives.}\label{fig:eff-thm}
	\end{center}
\end{figure}

\section{Conclusion}   
This paper proposes the first strategic network formation model that, given a matrix specifying the frequency of a coordination problem among groups, identifies the conditions that result in multigroup formation. The model deviates from the seminal papers on strategic network formation in that it accounts for heterogeneous frequency of control problems arising among the individuals and investigates pairwise stability and efficiency of multigroup connectivity structures, as well as convergence of the formation dynamics. 

In our model link formations occur bilaterally and thus many of the classical game-theoretic concepts do not apply to our framework. In particular, to study equilibrium structures, we utilize the concept of pairwise stability. A key challenge in our problem stems from the fact that not many tools are available for rigorous analysis or that they cannot be applied to the case of heterogeneous coordination problems among groups.

We identified the ranges of parameters where pairwise stable and efficient structures do and do not coincide and concluded that, for two-group structures, the efficient structures always has the same or a larger number of links than the pairwise stable ones. We also considered the price of anarchy and observed that the highest value occurs for the case when pairwise stable structures consist of disjoint union of cliques and the efficient structure has one link. Similar to the classical models, at the two ends of the spectrum of link values there is an overlap between efficient and stable structures.

We presented the conditions that result in the formation dynamics starting from an invariant set converge to cliques, and provided rigorous results for the number of interconnections in two-group structures. However, exact identification of the boundaries that result in certain number of interconnections among arbitrary number of groups with arbitrary size and interconnection structure is out of scope of this paper.

We note that by providing a full characterization of pairwise stability and efficiency for a two-group model, we focus on local topologies versus global topologies, as the individual interconnections can capture valuable information about the whole network and that all interconnections have subsets of two groups. This can be  interpreted into taking the distance only for the people in one's group or in the next immediate group in the utility function.

\section*{Acknowledgment}
The authors thank professors Ambuj K.\ Singh and Noah E.\ Friedkin for
their valuable comments and suggestions. This material is based upon work
supported by, or in part by, the U.S.~Army Research Laboratory and the
U.S.~Army Research Office under grant numbers W911NF-15-1-0577, and the National Natural Science Foundation of China under grant 61963032.

\ifCLASSOPTIONcaptionsoff
  \newpage
\fi

\bibliographystyle{plainurl+isbn}
\bibliography{alias,Main,FB,New}

\end{document}